\documentclass[a4paper,12pt]{article}

\usepackage{amsmath}
\usepackage{a4wide}
\usepackage{amsfonts}
\usepackage{color}
\usepackage{amssymb,,amsmath}
\usepackage{graphicx}
\usepackage{amsfonts}
\newenvironment{proof}{\noindent\textbf{Proof:}}{\hspace{\stretch{1}}$\square$}

\newtheorem{theorem}{Theorem}[section]
\newtheorem{corollary}[theorem]{Corollary}
\newtheorem{definition}{Definition}
\newtheorem{proposition}[theorem]{Proposition}
\newtheorem{lemma}[theorem]{Lemma}

\newcommand{\rF}{\mathcal{F}}
\newcommand{\rD}{\mathcal{D}}
\newcommand{\rH}{\mathcal{H}}

\newcommand{\rK}{\mathcal{K}}
\newcommand{\rN}{\mathcal{N}}
\newcommand{\rM}{\mathcal{M}}
\newcommand{\rP}{\mathcal{P}}
\newcommand{\rE}{\mathcal{E}}
\newcommand{\rL}{\mathcal{L}}
\newcommand{\rZ}{\mathcal{Z}}

\newcommand{\rU}{\mathcal{U}}

\newcommand{\TF}{{T\Phi}}
\font\timesept=cmr7

\newcommand{\CC}{\mathbb{C}}

\newcommand{\EE}{\mathbb{E}}

\newcommand{\NN}{\mathbb{N}}

\newcommand{\PP}{\mathbb{P}}

\def\tr{\mathop{\rm Tr\,}\nolimits}
\def\O{\Omega}
\def\wt{\widetilde}
\def\indic{{\mathop{\rm 1\mkern-4mu l}}}
\def\qq{\qquad}
 
\def\ld{\ldots}
\def\wh{\widehat}
\def\wt{\widetilde}

\def\ol{\overline}

\def\indic{{\mathop{\rm 1\mkern-4mu l}}}

\def\tr{\mathop{\rm Tr\,}\nolimits}

\def\Tri{\mathop{\rm Tr}\nolimits}

\def\ecarte{\vphantom{\buildrel\bigtriangleup\over =}}

\def\ps#1#2{\langle #1\, ,\, #2\rangle}

\def\[{{\mathord{[\![}}}
\def\]{{\mathord{]\!]}}}

\def\norme#1{\left\| #1\right\|}
\def\normca#1{{\left\| #1\right\|}^2}
\def\ab#1{\left\vert #1\right\vert}

\def\x{\chi}

\def\O{\Omega}

\def\a{\alpha}

\def\s{\sigma}
\def\m{\mu}
\def\r{\rho}
\def\n{\nu}

\def\d{\delta}

\def\l{\lambda}

\def\o{\omega}
\def\wh{\widehat}


\begin{document}

\title{Classical Noises Emerging\\ from Quantum Environments\footnote{Work supported by ANR project ``StoQ" N${}^\circ$ ANR-14-CE25-0003}}

\author{S. Attal, J. Deschamps \&  C. Pellegrini}

\date{}

\maketitle

\begin{abstract}
In the framework of quantum open systems, that is, simple quantum systems coupled to quantum baths, our aim is to characterize those actions of the quantum environment which give rise to dynamics dictated by classical noises. First, we consider the discrete time scheme, through the model of repeated quantum interactions. We explore those unitary interactions which make the environment acting as if it were a classical noise, that is, dynamics which ought to random walks on the unitary group of the small system. We show that this characterization is intimately related to the notion of complex obtuse random variables. These particular random variables have an associated 3-tensor whose symmetries make it diagonalizable in some orthonormal basis. We show how this diagonalisation entirely describes the behavior of the random walk associated to the action of the environment. In particular this 3-tensor and its diagonalization characterize the behavior of the continuous-time limits; they characterize the directions along which the limit process is of diffusive or of Poisson nature.\end{abstract}

\vfill\eject
\tableofcontents

\section{Introduction}

Repeated quantum interaction models are physical models introduced and developed in \cite{A-P} which consist in describing the Hamiltonian dynamics of a quantum system undergoing a sequence of interactions with an  environment made of a chain of identical systems. These
models were developed for they furnish toy models for quantum
dissipative systems. They are at the same time Hamiltonian and
Markovian. They spontaneously give rise to quantum stochastic
differential equations in the continuous time limit.  It has been proved in \cite{B-P} and \cite{BdP} that they constitute a good toy model for a quantum heat bath in some situations and that they can also give an account of the diffusive behavior of an electron in an electric field, when coupled to a heat bath. When adding to each step of the dynamics a measurement of the piece of the environment which has just interacted, we recover all the discrete-time quantum trajectories for quantum systems (\cite{Pel}, \cite{Pel2}, \cite{Pel3}). Physically this model corresponds exactly to physical experiments such as the ones performed by S. Haroche et al.\thinspace on the quantum trajectories of a photon in a cavity (\cite{Har}, \cite{Har2}).

\smallskip
The discrete-time dynamics of these repeated interaction systems, as well as their continuous-time limit, give rise to time evolutions driven by quantum noises coming from the environment. These quantum noises emerging from the environment describe all the possible actions inside the environment (excitation, return to ground state, jumps in between two energy levels, ...). It is a remarkable fact that these quantum noises can also be combined together in order to give rise to classical noises. In discrete-time they give rise to any random walk, in continuous-time they give rise to many well-known stochastic processes among which are the Brownian motion, Poisson processes and all Levy processes. 

\smallskip
Our aim in this article is to characterize, among all possible evolutions driven by repeated quantum interactions, those which are driven by classical noises. This characterization is carried by the form of the basic unitary interaction in between the small system and one piece of the environment. In this article we concentrate on a large class of those \emph{classical unitary actions} of the environment. We show that they are intimately related to a particular algebraical structure: those 3-tensors on $\CC^N$ that can be diagonalized in some orthonormal basis, that is, the 3-tensor analogue of normal matrices (which are 2-tensors). 

We show that these diagonalizable 3-tensors are naturally associated to particular random variables on $\CC^N$: the \emph{complex obtuse random variables}, as were defined and developed in \cite{ADP}. They are a kind of basis for all the random variables in $\CC^N$. These particular random variables which appear as associated to the classical unitary actions of the environment are those which drive their classical action.

\smallskip
We then have a discussion on the continuous-time limit of this situation. This short discussion is based on the study of complex normal martingales, which are the continuous-time limits of complex obtuse random walks. 

\medskip
This article is constructed as follows. 
Section 2 is devoted to characterizing those unitary operators on bipartite systems which act classically on one of the two systems. There, we define the class of unitary operators on which we will concentrate. 


In Section 3 we recall the main properties of  complex obtuse random variables and their associated 3-tensors.

In Section 4 we establish our main result: we show that the diagonalization of 3-tensors gives an explicit representation of the repeated quantum interaction dynamics in terms of obtuse random walks. This representation gives an explicit description of the evolution of the small system as a random walk on its unitary group.

In Section 5 we present the mathematical setup associated to repeated quantum interactions and we apply the multiplication operator representation of classical unitary actions to this context; this gives rise to quantum dynamics driven by classical random walks. 

\smallskip
Section 6 is devoted to a discussion on the continuous-time limits of these classical unitary random walks, in terms of complex normal martingales.

\section{Classical Unitary Actions of Quantum Environments}

In this section we consider two quantum systems, with state space $\rH$ and $\rK$ respectively, which are interacting. We shall concentrate on those unitary operators on $\rH\otimes\rK$ which correspond to a classical action of the environment, that is, an action as a random unitary operator on $\rH$. 

In the whole article we suppose that $\rH$ and $\rK$ are finite dimensional. This is not a strictly necessary condition, as many of the points we discuss here could be extended to the infinite dimensional case.

\smallskip
Our approach is mainly motivated by the following statistical interpretation of density matrices. 

\subsection{Statistical Interpretation of Density Matrices}

In the quantum mechanics of open systems, states are represented by \emph{density matrices}, that is, positive, trace-class operators, with trace equal to 1. There are two ways to understand these density matrices, two \emph{interpretations}. 
First of all, the open quantum system point of view: the density matrices are what you see from a quantum system which is coupled to some environment, the whole coupled system being in a pure state. The following result is well-known.

\begin{theorem}\label{T:density_matrix}
Let $\r_\rH$ be an operator on a Hilbert space $\rH$. The following
assertions are equivalent. 

\smallskip
\noindent i) There exists a Hilbert space $\rK$ and a unit vector $\Psi$ in
   $\rH\otimes\rK$ such that $
\r_\rH=\Tri_\rK( \vert\Psi\rangle\langle\Psi\vert)\,.
$

\smallskip
\noindent ii)  The operator
$\rho_\rH$ is positive, trace-class and  $\tr(\rho_\rH)=1$\,. 
\end{theorem}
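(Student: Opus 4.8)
The plan is to prove the two implications separately; the only step of real content is the construction of a purifying vector for (ii)$\,\Rightarrow\,$(i), so I would dispatch (i)$\,\Rightarrow\,$(ii) first.

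For (i)$\,\Rightarrow\,$(ii), assume $\r_\rH=\Tri_\rK(\vert\Psi\rangle\langle\Psi\vert)$ for some unit vector $\Psi\in\rH\otimes\rK$. Fixing an orthonormal basis $(f_k)$ of $\rK$, the definition of the partial trace gives, for every $\phi\in\rH$,
$$\langle\phi,\r_\rH\,\phi\rangle=\sum_k\langle\phi\otimes f_k,\vert\Psi\rangle\langle\Psi\vert\,(\phi\otimes f_k)\rangle=\sum_k\vert\langle\Psi,\phi\otimes f_k\rangle\vert^2\geq0\,,$$
so $\r_\rH$ is positive. Since $\vert\Psi\rangle\langle\Psi\vert$ has rank one it is trace-class, and the partial trace of a trace-class operator is again trace-class, hence $\r_\rH$ is trace-class; finally $\tr(\r_\rH)=\tr(\vert\Psi\rangle\langle\Psi\vert)=\Vert\Psi\Vert^2=1$, since the full trace is unchanged by performing a partial trace first.

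For (ii)$\,\Rightarrow\,$(i), I would invoke the spectral theorem: since $\r_\rH$ is positive and trace-class (and $\rH$ is finite dimensional), there is an orthonormal family $(e_n)$ of eigenvectors of $\r_\rH$ with eigenvalues $\lambda_n\geq0$ such that $\r_\rH=\sum_n\lambda_n\,\vert e_n\rangle\langle e_n\vert$ and $\sum_n\lambda_n=\tr(\r_\rH)=1$. Take $\rK=\rH$ equipped with an orthonormal basis $(f_n)$ indexed in the same way, and set $\Psi=\sum_n\sqrt{\lambda_n}\,e_n\otimes f_n$; then $\Vert\Psi\Vert^2=\sum_n\lambda_n=1$, so $\Psi$ is a unit vector. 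Expanding $\vert\Psi\rangle\langle\Psi\vert=\sum_{n,m}\sqrt{\lambda_n\lambda_m}\,\vert e_n\rangle\langle e_m\vert\otimes\vert f_n\rangle\langle f_m\vert$ and using $\Tri_\rK(\vert f_n\rangle\langle f_m\vert)=\delta_{nm}$, the partial trace over $\rK$ collapses the double sum to $\sum_n\lambda_n\,\vert e_n\rangle\langle e_n\vert=\r_\rH$, which is exactly the required identity.

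I do not anticipate any genuine obstacle here — this is the classical purification argument. The only points deserving a little care are the elementary trace identities relating the trace on $\rH\otimes\rK$ to the iterated partial traces (used in the easy direction), and, were one to drop the finite-dimensionality hypothesis, the convergence in $\rH\otimes\rK$ of the series defining $\Psi$, which is ensured by $\sum_n\lambda_n<\infty$ together with the orthogonality of its summands. In the finite-dimensional setting of the paper everything is a finite sum and entirely elementary.
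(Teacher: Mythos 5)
Your argument is correct and is the standard purification proof; the paper itself states this result as ``well-known'' and gives no proof, so there is nothing to compare against. Both directions are handled properly, including the spectral decomposition, the construction $\Psi=\sum_n\sqrt{\lambda_n}\,e_n\otimes f_n$, and the computation $\Tri_\rK(\vert f_n\rangle\langle f_m\vert)=\delta_{nm}$; your closing remark about convergence of the series in the infinite-dimensional case is the right caveat, and indeed the paper sidesteps it by assuming $\rH$, $\rK$ finite dimensional throughout.
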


Density matrices can always be decomposed as 
\begin{equation}\label{E:densitymatrix}
\rho=\sum_{i=1}^n\l_i\,\vert e_i\rangle\langle e_i\vert
\end{equation}
for some  orthonormal basis $\{e_i\,;\, i=1,\ldots,n\}$ of $\rH$ and some positive
eigenvalues $\l_i$ satisfying 
$\sum_{i=1}^n\,\l_i=1$.
The density matrices represent the generalization of the notion of wave function which
is necessary to handle open quantum systems. Their decomposition under
the form \eqref{E:densitymatrix} can be understood as a mixture of wave
functions. 

\smallskip
Note the following extension of the representation of density matrices, which will be useful later. 

\begin{lemma}
If $\{\phi_i\,;\,i=1,\ldots,m\}$ is any family of norm 1 vectors in $\rH$ and if $\{\l_i\,;\,i=1,\ldots,m\}$ are positive reals such that $\sum_{i=1}^K \l_i=1$, then the operator
$
\r=\sum_{i=1}^m \l_i\,\vert\phi_i\rangle\langle\phi_i\vert
$
is a density matrix on $\rH$.
\end{lemma}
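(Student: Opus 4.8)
The plan is to verify directly that $\r$ satisfies condition (ii) of Theorem~\ref{T:density_matrix}, namely that $\r$ is positive, trace-class, and has trace $1$. Since $\rH$ is finite dimensional, every operator is trace-class, so only positivity and the normalization of the trace require attention.

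First I would record the elementary fact that, for each $i$, the operator $P_i=\vert\phi_i\rangle\langle\phi_i\vert$ is the orthogonal projection onto the line $\CC\phi_i$: for every $x\in\rH$ we have $\langle x,P_ix\rangle=\vert\langle\phi_i,x\rangle\vert^2\geq 0$, so $P_i\geq 0$, and since $\|\phi_i\|=1$ one gets $\Tr(P_i)=\langle\phi_i,\phi_i\rangle=1$. Next, because the cone of positive operators on $\rH$ is stable under sums and under multiplication by nonnegative scalars, and since each $\l_i>0$, the operator $\r=\sum_{i=1}^m\l_i P_i$ is positive. Finally, by linearity of the trace, $\Tr(\r)=\sum_{i=1}^m\l_i\Tr(P_i)=\sum_{i=1}^m\l_i=1$. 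Hence $\r$ fulfils (ii) of Theorem~\ref{T:density_matrix} and is a density matrix on $\rH$.

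There is no real obstacle here; the only point worth emphasising is that the $\phi_i$ are neither assumed orthogonal nor linearly independent, so \eqref{E:densitymatrix} is not literally the spectral decomposition of $\r$ — but neither the positivity argument nor the trace computation used orthogonality, so this is harmless. One could alternatively deduce the statement from Theorem~\ref{T:density_matrix} by constructing an explicit purification of $\r$ in $\rH\otimes\CC^m$, but the direct verification above is shorter and self-contained.
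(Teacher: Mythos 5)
Your proof is correct, and it fills in exactly what the paper leaves unproved: the lemma is stated in the paper without any justification, as the authors evidently consider it immediate. Your direct verification — positivity of each rank-one projector $\vert\phi_i\rangle\langle\phi_i\vert$, stability of the positive cone under nonnegative combinations, and linearity of the trace together with $\Tr(\vert\phi_i\rangle\langle\phi_i\vert)=\|\phi_i\|^2=1$ — is the natural argument, and you rightly observe that orthogonality of the $\phi_i$ plays no role anywhere in it.
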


The difference with the representation \eqref{E:densitymatrix} lies is the fact that the $\phi_i$'s need not be pairwise orthogonal. In particular there can be more of them than the dimension of the state space, or they can be linearly dependent.

\bigskip
The second point of view on density matrices is that they can be understood as \emph{random pure states}. Let us describe this interpretation and explain why it represents the same states as density matrices.
First of all recall the main axioms of Quantum Mechanics concerning states. 
The axioms of Quantum Mechanics say that measuring an observable $X$ (that is, a self-adjoint operator), with spectral measure $\xi_X$, gives a value for the measurement which lies in the set $A$ with probability 
$\tr(\r\,\xi_X(A))\,.$

After having measured that the observable $X$ lies inside the set $A$, the state $\r$ is transformed into the state
$$
\frac{\xi_X(A)\,\r\,\xi_X(A)}{\tr(\r\,\xi_X(A))}\,.
$$
This is the so-called ``\emph{reduction of the wave packet}''.

The last basic axiom concerning states in Quantum Mechanics is the one which describes their time evolution under the Schr\"odinger dynamics: if $H$ is the Hamiltonian of the system and $\r_0$ is the state at time 0, then the state at time $t$ is $\r_t=U_t\,\r_0\,U_t^*$, where $U_t=\exp(-i tH)$. 

\smallskip
We shall now compare these elements for density matrices with the ones we obtain with random states. 
Consider a random pure state $\vert \phi\rangle$ which is equal to $\vert \phi_i\rangle$ with probability $\l_i$, $i=1,\ldots, n$. Note that the $\phi_i$'s are all norm 1 but not necessarily orthogonal with each other.  

\begin{proposition}\label{P:stat_dens}
Let $\vert \phi\rangle$ be a random pure state which is equal to $\vert \phi_i\rangle$ with probability $\l_i$, $i=1,\ldots, m$. Let $\r$ be the density matrix 
$
\r=\sum_{i=1}^m \l_i\, \vert\phi_i\rangle\langle\phi_i\vert\,.
$
Then measuring any observable $X$ with the random pure state $\vert\phi\rangle$ gives a measure lying in a set $A$ with probability
$\tr(\r\,\xi_X(A))\,.$

Furthermore, after having measured that $X$ lies in $A$, the random state $\vert\phi\rangle$ is transformed into another random pure state $\vert\psi\rangle$ whose associated density matrix is
$$
\frac{\xi_X(A)\,\r\,\xi_X(A)}{\tr(\r\,\xi_X(A))}\,.
$$

Finally, under the unitary evolution associated to $H$, the random state $\vert \phi\rangle$, at time 0,  becomes at time $t$ a random pure state $\vert\phi_t\rangle$ whose associated density matrix is $U_t\, \r\,U_t^*$. 
\end{proposition}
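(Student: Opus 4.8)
The plan is to verify the three assertions in turn, each being a short computation once the probabilistic bookkeeping is set up; the only point needing care is the second one, where the update of the random state combines a Bayesian revision of the weights $\l_i$ with the von Neumann reduction of the wave packet applied separately inside each branch.

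For the first assertion I would condition on the value of the random state. Given $\{\vert\phi\rangle=\vert\phi_i\rangle\}$, the axioms of Quantum Mechanics recalled above say that the measurement of $X$ falls in $A$ with probability $\tr(\vert\phi_i\rangle\langle\phi_i\vert\,\xi_X(A))=\langle\phi_i,\xi_X(A)\phi_i\rangle$. Averaging over $i$ against the weights $\l_i$ and using linearity of the trace gives
$$
\sum_{i=1}^m \l_i\,\tr\bigl(\vert\phi_i\rangle\langle\phi_i\vert\,\xi_X(A)\bigr)=\tr\Bigl(\Bigl(\sum_{i=1}^m\l_i\,\vert\phi_i\rangle\langle\phi_i\vert\Bigr)\xi_X(A)\Bigr)=\tr(\r\,\xi_X(A)),
$$
which is the claimed probability.

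For the second assertion I would first apply Bayes' rule on the event $\{X\in A\}$, whose probability $\tr(\r\,\xi_X(A))$ we may assume to be positive (otherwise there is nothing to prove): conditionally on this event, the pre-measurement state was $\vert\phi_i\rangle$ with probability $\l_i\,\langle\phi_i,\xi_X(A)\phi_i\rangle/\tr(\r\,\xi_X(A))$. On each such branch the reduction of the wave packet turns $\vert\phi_i\rangle$ into the normalised vector $\xi_X(A)\vert\phi_i\rangle/\|\xi_X(A)\phi_i\|$; the branches with $\xi_X(A)\phi_i=0$ carry probability $0$ and are discarded, and since $\xi_X(A)$ is an orthogonal projection one has $\langle\phi_i,\xi_X(A)\phi_i\rangle=\|\xi_X(A)\phi_i\|^2$. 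Hence the post-measurement random pure state $\vert\psi\rangle$ equals $\xi_X(A)\vert\phi_i\rangle/\|\xi_X(A)\phi_i\|$ with probability $\l_i\,\|\xi_X(A)\phi_i\|^2/\tr(\r\,\xi_X(A))$, and its associated density matrix is
$$
\sum_i \frac{\l_i\,\|\xi_X(A)\phi_i\|^2}{\tr(\r\,\xi_X(A))}\cdot\frac{\xi_X(A)\vert\phi_i\rangle\langle\phi_i\vert\xi_X(A)}{\|\xi_X(A)\phi_i\|^2}=\frac{\xi_X(A)\,\r\,\xi_X(A)}{\tr(\r\,\xi_X(A))},
$$
the cancellation of $\|\xi_X(A)\phi_i\|^2$ being exactly what makes the formula collapse to the stated one (and the previous Lemma guarantees this is a genuine density matrix).

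For the third assertion, Schr\"odinger evolution is deterministic on each branch: if $\vert\phi\rangle=\vert\phi_i\rangle$ at time $0$ then $\vert\phi_t\rangle=U_t\vert\phi_i\rangle$ at time $t$, still with probability $\l_i$, so the associated density matrix is $\sum_i \l_i\,U_t\vert\phi_i\rangle\langle\phi_i\vert U_t^*=U_t\,\r\,U_t^*$ by linearity of $A\mapsto U_tAU_t^*$. The only genuine subtlety in the whole argument is the simultaneous use of prior revision and wave-packet reduction in the second part; the rest is linearity of the trace and of conjugation by $U_t$.
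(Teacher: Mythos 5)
Your argument is correct and follows essentially the same route as the paper: conditioning on the realised branch and averaging for the first claim, Bayes' rule combined with branch-wise wave-packet reduction for the second, and linearity of conjugation for the third. The only difference is that you spell out the third step (which the paper dismisses as obvious) and explicitly handle the degenerate branches with $\xi_X(A)\phi_i=0$ and the positivity of $\tr(\r\,\xi_X(A))$, which the paper leaves implicit.
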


\begin{proof}
Let us detail what happens if we measure the observable $X$ with the random state $\vert\phi\rangle$. Adopting obvious probabilistic notations, we know that
$\PP(\mbox{choose }\phi_i)=\l_i$ {and} 
$$
\PP(\mbox{measure in }A\,\vert\,\mbox{choose }\phi_i)=\normca{\xi_X(A)\, \vert \phi_i\rangle}\,=\ps{\phi_i}{\xi_X(A)\, \phi_i}\,.
$$
In particular, 
\begin{align*}
\PP(\mbox{measure in }A)&=\sum_{i=1}^m \PP(\mbox{measure in }A\,\vert\,\mbox{choose }\phi_i)\,
\PP(\mbox{choose }\phi_i)\\
&=\sum_{i=1}^m\ps{\phi_i}{\xi_X(A)\, \phi_i}\,\l_i=\tr(\r\,\xi_X(A))\,.
\end{align*} 
Hence measuring any observable $X$ of $\rH$ with the density matrix $\r$ or with the random state $\vert\phi\rangle$ gives the same probabilities. 

\smallskip
Let us compute the reduction of the wave packet in this case too. If we have measured $X$ with the random state $\vert\phi\rangle$ we have measured $X$ with one of the state $\vert \phi_i\rangle$. After having measured the observable $X$ with a value in $A$, the state of the system is one of the pure states
$$
\frac{\xi_X(A)\vert \phi_i\rangle}{\norme{\xi_X(A)\vert \phi_i\rangle}}\,,
$$
with probability 
\begin{align*}
\PP(\mbox{choose }\phi_i\,\vert\,\mbox{measure in }A)&=
\frac{\PP(\mbox{measure in }A,\ \mbox{choose }\phi_i)}{\PP(\mbox{measure in }A)}\\
&=\frac{\l_i\,\ps{\phi_i}{\xi_X(A)\, \phi_i}}{\tr(\r\,\xi_X(A))}=\frac{\l_i\,\normca{\xi_X(A)\, \phi_i}}{\tr(\r\,\xi_X(A))}\,.
\end{align*}
Hence, after the measurement we end up with a random pure state again, which corresponds to the density matrix
$$
\sum_{i=1}^m \frac{\l_i\normca{\xi_X(A)\, \phi_i}}{\tr(\r\,\xi_X(A))}\, \frac{\xi_X(A)\vert \phi_i\rangle\langle \phi_i\vert \xi_X(A)}{\normca{\xi_X(A)\vert \phi_i\rangle}}=\sum_i \l_i\,\frac{\xi_X(A)\vert \phi_i\rangle\langle \phi_i\vert \xi_X(A)}{\tr(\r\,\xi_X(A))}\,,
$$ 
that is, the density matrix
$$
\frac{\xi_X(A)\,\r\,\xi_X(A)}{\tr(\r\,\xi_X(A))}\,.
$$

The statement about the unitary evolution is obvious. 
\end{proof}

\bigskip
In that sense, from the physicist point of view, these two states, when measuring observables (which is what states are meant for) give exactly the same results, the same values, the same probabilities, the same reduction of the wave packet; they also have the same time evolution. Hence they describe the same ``\emph{state}" of the system.
This equivalence is only physical, in the sense that whatever one wants to do physically with these states, they give the same results. Mathematically they are clearly different objects: one is a positive trace 1 operator, the other one is a random pure state.

\smallskip
Note that the statistical interpretation of density matrices can be also understood in another sense, namely in the sense of \emph{indirect measurement}. Indeed, if $\r$ is a density matrix of the form
$$
\r=\sum_{i=1}^m \l_i\, \vert\phi_i\rangle\langle\phi_i\vert\,,
$$
then one can consider the Hilbert space $\rK=\CC^m$ and the following pure state on $\rH\otimes\rK$:
$$
\psi=\sum_{i=1}^m \sqrt{\l_i} \, \phi_i\otimes f_i\,,
$$
where $\{f_i\,;\, i=1,\ldots,m\}$ is an orthonormal basis of $\rK$. Now, measuring on $\rK$ along this basis, gives the state $\phi_i$ on $\rH$, with probability $\l_i$.

\subsection{Classical Unitary Actions}

We now focus on unitary interactions between two quantum systems. We have the two quantum systems $\rH$ and $\rK$ interacting with each other, their dynamics is driven by a total Hamiltonian
$$
H=H_S\otimes I+I\otimes K+H_{\rm int}\,.
$$
Both systems evolve this way during a time interval of length $h$, so that the evolution of the state of the system is driven by the unitary operator
$$
U=e^{-ihH}\,,
$$
which is a rather general unitary operator $U$ on $\rH\otimes \rK$. 

\smallskip
We are interested in the resulting action of $U$ on $\rH$. This action is described as follows. Consider any given state $\o$ on $\rK$, we compute the mapping
$$
\rL(\r)=\tr_{\rK}\left(U(\rho\otimes\o)U^*\right)
$$
for all density matrices $\rho$ of $\rH_S$. This mapping exactly expresses what we recover from the system $\rH$ of the action of the environment $K$ via the unitary $U$.
This mapping $\rL$ is well-known to be a \emph{quantum channel} on $\rH$ and it admits a so-called \emph{Krauss representation}, that is, there exist bounded operators $L_i$, $i\in I$, on $\rH$ such that
$
\sum_{i\in I} L_i^*L_i=I
$ 
and
$$
\rL(\rho)=\sum_{i\in I} L_i\, \rho\, L_i^*
$$
for all density matrices $\rho$.

The Krauss representation of $\rL$ as above is not unique, there is a freedom in the way of choosing the coefficients $L_i$. This non-uniqueness is completely described by the following well-known theorem. 

\begin{theorem}[GHJW Theorem]\label{T:GHJW}
The two Krauss representations 
$
\rL(\r)=\sum_{i=1}^l L_i\, \r\, L_i^*
$
and 
$
\rM(\r)=\sum_{i=1}^k M_i\, \r\,M_i^*\,,
$
with $l\leq k$ say, represent the same quantum channel if and only if there exists a unitary matrix $(u^i_j)_{i,j=1,\ldots, k}$ such that
$
M_i=\sum_{j=1}^k u^i_j\, L_j\,,
$
where the list of the $L_j$'s has been completed by zeros if $l<k$. 
\end{theorem}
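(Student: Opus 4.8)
The plan is to prove the two implications separately: the converse (``if'') by a direct computation, and the main implication (``only if'') by passing to the Choi--Jamiolkowski picture, where it reduces to a classical fact about rank-one decompositions of a positive operator --- this being, in essence, the channel incarnation of the Hughston--Jozsa--Wootters statement on ensemble decompositions of a density matrix.

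For the converse, assume $M_i=\sum_{j=1}^k u^i_j L_j$ with $(u^i_j)$ unitary and the $L_j$-list padded by zeros. Then $\rM(\r)=\sum_{i=1}^k M_i\,\r\,M_i^*=\sum_{j,l=1}^k\big(\sum_{i=1}^k u^i_j\,\ol{u^i_l}\big)\,L_j\,\r\,L_l^*$, and since the columns of $(u^i_j)$ are orthonormal, $\sum_i u^i_j\,\ol{u^i_l}=\d_{jl}$, so this collapses to $\sum_{j=1}^k L_j\,\r\,L_j^*=\rL(\r)$ (the zero terms dropping out); hence $\rM=\rL$.

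For the main implication, first observe that density matrices span $\rB(\rH)$, so $\rL=\rM$ as linear maps on $\rB(\rH)$ and the normalization $\sum_i L_i^*L_i=I$ plays no role. Fix an orthonormal basis $\{e_a\}_{a=1}^N$ of $\rH$, put $\Omega=\sum_a e_a\otimes e_a\in\rH\otimes\rH$, and ``vectorize'' operators via $\wh A=(A\otimes I)\Omega=\sum_a (Ae_a)\otimes e_a$; this map is linear and injective. A one-line computation shows that for any Kraus family $(L_i)$ of $\rL$ the Choi operator $C:=(\rL\otimes\mathrm{id})(\vert\Omega\rangle\langle\Omega\vert)$ equals $\sum_i\vert\wh{L_i}\rangle\langle\wh{L_i}\vert$. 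Applying this to both families and padding $(\wh{L_i})_{i\leq l}$ with $k-l$ zero vectors yields two decompositions $\sum_{i=1}^k\vert\wh{L_i}\rangle\langle\wh{L_i}\vert=C=\sum_{i=1}^k\vert\wh{M_i}\rangle\langle\wh{M_i}\vert$ of one positive operator into $k$ rank-$\leq 1$ pieces. The remaining linear-algebra lemma is: if $C=\sum_{i=1}^k\vert v_i\rangle\langle v_i\vert=\sum_{i=1}^k\vert w_i\rangle\langle w_i\vert$ then $w_i=\sum_j u^i_j v_j$ for some $k\times k$ unitary $(u^i_j)$. I would prove it by diagonalizing $C=\sum_{p=1}^r\mu_p\vert f_p\rangle\langle f_p\vert$ with $\mu_p>0$, noting $v_i,w_i\in\Ran C=\mathrm{span}\{f_p\}$, writing $v_i=\sum_p a_{pi}f_p$ and $w_i=\sum_p b_{pi}f_p$, and observing that matching with $C$ forces the $r\times k$ matrices $(\mu_p^{-1/2}a_{pi})$ and $(\mu_p^{-1/2}b_{pi})$ to have orthonormal rows; extending each of these co-isometries to a $k\times k$ unitary and composing one with the adjoint of the other produces $(u^i_j)$ after a routine index computation. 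Finally, linearity and injectivity of $A\mapsto\wh A$ turn $\wh{M_i}=\sum_j u^i_j\,\wh{L_j}$ into $M_i=\sum_j u^i_j L_j$, which is the asserted relation.

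The single delicate point sits inside that lemma: once both lists have been brought to the common length $k$, the connecting map must be a genuine $k\times k$ \emph{unitary}, not merely an isometry between spaces of unequal dimension --- and this is precisely what the zero-padding of the shorter Kraus family secures, since it is the padded co-isometry with orthonormal rows that extends to a square unitary.
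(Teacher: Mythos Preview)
The paper does not prove this theorem: it is quoted as a ``well-known theorem'' (the GHJW theorem) and used without proof. So there is no argument in the paper to compare against.

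Your proof is correct and is one of the standard routes to this result. The ``if'' direction is a clean computation. For the ``only if'' direction, passing to the Choi matrix $C=(\rL\otimes\mathrm{id})(\vert\Omega\rangle\langle\Omega\vert)=\sum_i\vert\wh{L_i}\rangle\langle\wh{L_i}\vert$ reduces the statement to the classical lemma on rank-one decompositions of a positive operator, which you handle correctly: writing $v_i=\sum_p a_{pi}f_p$ in the eigenbasis of $C$ forces $\sum_i a_{pi}\,\ol{a_{qi}}=\mu_p\,\d_{pq}$, so the rescaled $r\times k$ matrices $(\mu_p^{-1/2}a_{pi})$ and $(\mu_p^{-1/2}b_{pi})$ are co-isometries, and extending both to $k\times k$ unitaries $\wt V,\wt W$ gives $U^T=\wt V^*\wt W$ with $w_i=\sum_j u^i_j v_j$. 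The injectivity of $A\mapsto\wh A$ then transports this back to the Kraus operators. Your remark on zero-padding is exactly the point that guarantees a genuine $k\times k$ unitary rather than a mere partial isometry.

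Two minor comments. First, your appeal to ``density matrices span $\rB(\rH)$'' is valid here because the paper works throughout in finite dimension; you might say so explicitly. Second, in the converse you wrote ``the columns of $(u^i_j)$ are orthonormal'' to justify $\sum_i u^i_j\,\ol{u^i_l}=\d_{jl}$; this is $U^*U=I$, which for a square unitary is of course equivalent to $UU^*=I$, but it is cleaner to say simply ``by unitarity of $(u^i_j)$''.
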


Now we characterize those unitary actions that we call \emph{classical}. We begin with an important remark.

\begin{proposition}
Let $\rL$ be a quantum channel on $\rH$ which admits a Krauss decomposition
$$
\rL(\r)=\sum_{i=1}^m L_i\, \r\, L_i^*
$$
where the $L_i$'s are all scalar multiples of unitary operators: $L_i=\l_i\, U_i$ on $\rH$. In particular we have
$
\sum_{i=1}^m \ab{\l_i}^2=1\,.
$
Then the action of $\rL$ on the states of $\rH$ is exactly the same as if we apply a random unitary transform to $\rH$: choosing one of the unitary actions $U_i$ with respective probability $\ab{\l_i}^2$\,.
\end{proposition}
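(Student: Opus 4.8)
The plan is to reduce the whole statement to the statistical interpretation of density matrices established in Proposition~\ref{P:stat_dens}. Let $R$ denote the random unitary operation on $\rH$ which applies $U_i$ with probability $\ab{\l_i}^2$; this is a legitimate probability distribution since $\sum_{i=1}^m\ab{\l_i}^2=1$. What we must show is that, starting from an arbitrary state of $\rH$, the state produced by $R$ — read as a random pure state in the sense of Section~2.1 — carries exactly the same physical content (same measurement statistics, same reduction of the wave packet, same Schr\"odinger evolution) as the state $\rL(\r)$. By Proposition~\ref{P:stat_dens} it is enough to check that the random pure state produced by $R$ has associated density matrix equal to $\rL(\r)$.

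First take a pure input $\r=\vert\psi\rangle\langle\psi\vert$. Applying $R$ gives the random pure state equal to $U_i\vert\psi\rangle$ with probability $\ab{\l_i}^2$, whose associated density matrix, by Proposition~\ref{P:stat_dens}, is
$$
\sum_{i=1}^m \ab{\l_i}^2\, U_i\vert\psi\rangle\langle\psi\vert U_i^* = \sum_{i=1}^m (\l_i U_i)\,\vert\psi\rangle\langle\psi\vert\,(\l_i U_i)^* = \sum_{i=1}^m L_i\,\r\,L_i^* = \rL(\r)\,.
$$
For a general density matrix, write $\r=\sum_{j=1}^n \mu_j\,\vert e_j\rangle\langle e_j\vert$ as in \eqref{E:densitymatrix} and regard $\r$ itself as the random pure state equal to $\vert e_j\rangle$ with probability $\mu_j$. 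Composing with $R$ yields the random pure state equal to $U_i\vert e_j\rangle$ with probability $\mu_j\,\ab{\l_i}^2$, whose associated density matrix, again by Proposition~\ref{P:stat_dens}, equals
$$
\sum_{i=1}^m\sum_{j=1}^n \mu_j\,\ab{\l_i}^2\, U_i\vert e_j\rangle\langle e_j\vert U_i^* = \sum_{i=1}^m L_i\left(\sum_{j=1}^n \mu_j\,\vert e_j\rangle\langle e_j\vert\right)L_i^* = \sum_{i=1}^m L_i\,\r\,L_i^* = \rL(\r)\,.
$$
Since this density matrix coincides with $\rL(\r)$, Proposition~\ref{P:stat_dens} guarantees that measuring any observable, performing the reduction of the wave packet, and evolving under any Hamiltonian all produce identical outcomes whether one uses the state $\rL(\r)$ or the random pure state obtained from $R$; this is exactly the claimed identification of $\rL$ with a random unitary action.

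The only genuinely delicate point is conceptual rather than computational: one must be explicit that ``the same action'' is meant in the physical sense of Section~2.1 — equality of all measurement statistics together with the post-measurement and time-evolved states — and not as an equality of mathematical objects, since $\rL(\r)$ is a positive trace-one operator while the output of $R$ is a random pure state. Once Proposition~\ref{P:stat_dens} is invoked to bridge these two descriptions, what remains is merely the elementary identity $\sum_i (\l_i U_i)\,\cdot\,(\l_i U_i)^* = \sum_i L_i\,\cdot\,L_i^*$ and linearity in $\r$.
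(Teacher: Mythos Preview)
Your proof is correct and follows the same approach as the paper: invoke the statistical interpretation of density matrices (Proposition~\ref{P:stat_dens}) to identify the random unitary action with the density matrix $\sum_i \ab{\l_i}^2\, U_i\,\r\,U_i^* = \rL(\r)$. The paper's version is terser, working directly with a general $\r$ in one line rather than treating pure inputs first and then passing to mixtures via the spectral decomposition; your extra care in handling the mixed case explicitly is harmless but not strictly needed once one accepts that the statistical interpretation extends from random pure states to random density matrices.
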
 

\begin{proof}
According to the rules of Quantum Mechanics, if the unitary transform $U_i$ is applied to $\rH$ then the state $\r$ of $\rH$ is transformed into $U_i\, \r\, U_i^*$. If this occurs randomly with probability $p_i$, respectively, then the new state of $\rH$ is one of the states $U_i\, \r\, U_i^*$ with probability $p_i$. That is, according to the statistical interpretation of density matrices developed above, we end up with the state 
$$
\sum_{i=1}^m p_i\, U_i\, \r\, U_i^*=\sum_{i=1}^m (\sqrt{p_i}\,U_i)\, \r\, (\sqrt{p_i}\, U_i)^*=\rL(\r)\,.
$$
This gives the result.
\end{proof}

\smallskip
Note that we have the same interpretation of the result above in terms of \emph{indirect measurements}. Indeed, consider the space $\rK=\CC^m$, with an orthonormal basis $\{f_i\,;\, i=1,\ldots, m\}$. On the space $\rH\otimes\rK$ there exists a unitary operator $V$ whose first (block) column in the basis $(f_i)$ is given by the coefficients $L_i=\sqrt{\l_i}\, U_i$. Consider any density matrix $\r$ on $\rH$ and evolve the state $\r\otimes\vert f_1\rangle\langle f_1\vert$ with $V$, this gives rise to the state
$$
V\, (\r\otimes\vert f_1\rangle\langle f_1\vert)\, V^*\,.
$$
Now, measuring $\rK$ along the basis $(f_i)$ gives the state $U_i\,\r\,U_i^*$ with probability $\l_i$. 

\bigskip
This motivates the following definition of a classical unitary action from $\rK$ onto $\rH$.
\begin{definition}
A unitary operator $U$ on $\rH\otimes\rK$ gives rise to a \emph{classical action of $\rK$ onto $\rH$} if for every state $\o$ on $\rK$, the quantum channel
$$
\rL(\r)=\tr_{\rK}\left(U(\r\otimes\o)U^*\right)
$$
on $\rH$ admits a Krauss decomposition made of multiples of unitary operators only.
\end{definition}

For short we may say that $U$ is a \emph{classical unitary}, but one has to keep in mind that it is classical for its action on $\rH$ only, that is, the definition above is not symmetric in the roles of $\rH$ and $\rK$.

\smallskip
Due to the freedom left by the GHJW Theorem on the Krauss representation of a given quantum channel, the above definition of a classical unitary action seems very difficult to characterize. We have not been able to completely characterize it. There is a very strong condition in the above definition: it is the fact that this property is asked for every state $\o$ on $\rK$. But the strength of this condition is compensated by the fuzziness of the sentence ``there exists a Krauss decomposition made of ...", which is well-known to be very hard to characterize in general.

Furthermore, it is not clear if one should ask the $U_i$'s to be independent of $\o$ or not, if one should ask the $\l_i$'s to be linear in $\o$ or not. In \cite{DNP}, results in this direction have been addressed and partial results have been obtained.

\smallskip
In this article we concentrate on a very large class of unitary actions which are classical in that sense. We are strongly convinced and we conjecture that they are the only possible classical unitaries, but we have not been able to prove this fact.

\begin{proposition}\label{P:stat_unit}
If  $U$ is of the form
$$
U=\sum_{i=1}^k U_i\otimes \vert e_i\rangle\langle f_i\vert
$$
for some unitary operators $U_i$ on $\rH$, some orthonormal bases $(e_i)$ and $(f_i)$ of $\rK$, then $U$ is a classical unitary action from $\rK$ onto $\rH$.
\end{proposition}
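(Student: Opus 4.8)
The plan is to compute the quantum channel $\rL$ explicitly for an arbitrary state $\o$ on $\rK$ and simply read off a Krauss decomposition of the required form. Before that, one should note for consistency that $U$ as written is genuinely unitary: since $\langle e_j\,,\,e_i\rangle=\d_{ij}$ one gets $U^*U=\sum_i U_i^*U_i\otimes\vert f_i\rangle\langle f_i\vert=I$, and likewise $UU^*=I$ using $\langle f_j\,,\,f_i\rangle=\d_{ij}$.

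Now fix a state $\o$ on $\rK$ and a density matrix $\r$ on $\rH$. Diagonalize $\o=\sum_l \mu_l\,\vert g_l\rangle\langle g_l\vert$ in an orthonormal basis $(g_l)$ of $\rK$ as in \eqref{E:densitymatrix}, with $\mu_l\geq 0$ and $\sum_l\mu_l=1$, so that $\r\otimes\o=\sum_l\mu_l\,\r\otimes\vert g_l\rangle\langle g_l\vert$. Expanding $U$ and $U^*$ in the given form, the core computation is
$$
U\,(\r\otimes\vert g_l\rangle\langle g_l\vert)\,U^*=\sum_{i,j}\langle f_i\,,\,g_l\rangle\,\overline{\langle f_j\,,\,g_l\rangle}\; U_i\,\r\,U_j^*\otimes\vert e_i\rangle\langle e_j\vert\,.
$$
Taking the partial trace over $\rK$ then collapses the $\rK$-component: because $(e_i)$ is orthonormal, $\tr_{\rK}(\vert e_i\rangle\langle e_j\vert)=\d_{ij}$, which annihilates every off-diagonal term and leaves $\sum_i\ab{\langle f_i\,,\,g_l\rangle}^2\, U_i\,\r\,U_i^*$. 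Summing against $\mu_l$ and interchanging the sums gives
$$
\rL(\r)=\sum_{i=1}^k p_i\; U_i\,\r\,U_i^*\,,\qquad\text{where}\qquad p_i:=\sum_l \mu_l\,\ab{\langle f_i\,,\,g_l\rangle}^2=\langle f_i\,,\,\o\, f_i\rangle\,.
$$

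Finally one checks that this is a Krauss decomposition of the allowed type: each $p_i\geq 0$, and since $(f_i)$ is an orthonormal basis of $\rK$ we have $\sum_{i=1}^k p_i=\sum_{i=1}^k\langle f_i\,,\,\o\, f_i\rangle=\tr(\o)=1$. Hence $\rL(\r)=\sum_{i=1}^k(\sqrt{p_i}\,U_i)\,\r\,(\sqrt{p_i}\,U_i)^*$ exhibits $\rL$ with a Krauss decomposition made of scalar multiples of unitary operators, which is exactly the defining property of a classical unitary action from $\rK$ onto $\rH$. There is no genuine obstacle here; the only points requiring a little care are that $\o$ must be allowed to be an arbitrary (possibly mixed) state, handled by its spectral decomposition, and that the cross terms $i\neq j$ vanish precisely because the partial trace is computed against the orthonormal family $(e_i)$ — the family $(f_i)$ only enters through the probabilities $p_i$ and the normalization $\sum_i p_i=1$.
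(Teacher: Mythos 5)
Your proof is correct and follows essentially the same route as the paper: verify unitarity of $U$, compute $U(\r\otimes\o)U^*$, take the partial trace, and read off the Krauss coefficients $\sqrt{p_i}\,U_i$ with $p_i=\langle f_i\,,\,\o\, f_i\rangle$. The only cosmetic difference is that you diagonalize $\o$ in an intermediate step, whereas the paper works directly with the matrix elements $\langle f_i\vert\o\vert f_j\rangle$ and collapses the cross terms via $\tr_{\rK}(\vert e_i\rangle\langle e_j\vert)=\delta_{ij}$ without ever spectrally decomposing $\o$.
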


\begin{proof}
First of all, an operator of the form above is a unitary operator, for
\begin{align*}
U^*U&=\sum_{i,j} U_i^*U_j\otimes\vert f_i\rangle\langle e_i\vert\,\vert e_j\rangle\langle f_j\vert=\sum_i U_i^*U_i\otimes \vert f_i\rangle\langle f_i\vert\\
&=\sum_i I\otimes \vert f_i\rangle\langle f_i\vert=I\otimes I\,,
\end{align*}
and the same holds in the same way for $UU^*$.

Let $\r$ be some initial state on $\rH$ and $\o$ be the state of $\rK$ before interaction. After interaction the state becomes
\begin{align*}
U(\r\otimes\o)U^*&=\sum_{i,j} U_i\,\r\, U_j^*\otimes \vert e_i\rangle\langle f_i\vert \,\o\, \vert f_j\rangle\langle e_j\vert \\
&=\sum_{i,j}\langle f_i\vert \,\o\, \vert f_j\rangle\, U_i\,\r\,U_j^*\otimes \vert e_i\rangle\langle e_j\vert \,.
\end{align*}
The resulting state on $\rH_S$ is 
$$
\rL(\r)=\tr_{\rK}\left(U(\r\otimes\o)U^*\right)=\sum_{i}\langle f_i\vert \,\o\, \vert f_i\rangle\, U_i\, \r\,U_i^*\,.
$$
This is exactly the required form for $U$ to be a classical unitary action.
\end{proof}

\smallskip
Note that, from the proof above, we can immediately deduce the following important simplification.

\begin{corollary}
If  $U$ is of the form
$$
U=\sum_{i=1}^k U_i\otimes \vert e_i\rangle\langle f_i\vert
$$
for some unitary operators $U_i$ on $\rH$, some orthonormal bases $(e_i)$ and $(f_i)$ of $\rK$, then $U$ has the same classical unitary action on $\rH$ as the operator
\begin{equation}\label{E:finalU}
V=\sum_{i=1}^k U_i\otimes \vert f_i\rangle\langle f_i\vert\,.
\end{equation}
\end{corollary}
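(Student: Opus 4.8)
The plan is to reduce directly to the computation already carried out in the proof of Proposition~\ref{P:stat_unit}. Recall that there it was shown that, for $U=\sum_{i=1}^k U_i\otimes\vert e_i\rangle\langle f_i\vert$ and for any state $\o$ on $\rK$, one has
$$
\rL_U(\r)=\tr_\rK\bigl(U(\r\otimes\o)U^*\bigr)=\sum_{i=1}^k \langle f_i\vert\,\o\,\vert f_i\rangle\, U_i\,\r\,U_i^*\,.
$$
The only place where the vectors $e_i$ entered this computation was through the partial trace of the operators $\vert e_i\rangle\langle e_j\vert$ on $\rK$; since $(e_i)$ is orthonormal we have $\tr_\rK(\vert e_i\rangle\langle e_j\vert)=\delta_{ij}$, which is exactly what produced the diagonal sum above. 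Hence the first step is to repeat this calculation verbatim with $V=\sum_{i=1}^k U_i\otimes\vert f_i\rangle\langle f_i\vert$ in place of $U$: since $(f_i)$ is again an orthonormal basis, $V$ is unitary by the same argument as in Proposition~\ref{P:stat_unit} (indeed $\sum_i U_i^*U_i\otimes\vert f_i\rangle\langle f_i\vert=I\otimes I$), and
$$
\rL_V(\r)=\tr_\rK\bigl(V(\r\otimes\o)V^*\bigr)=\sum_{i=1}^k \langle f_i\vert\,\o\,\vert f_i\rangle\, U_i\,\r\,U_i^*\,.
$$

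The second step is simply to observe that these two expressions coincide: $\rL_U=\rL_V$ as maps on the density matrices of $\rH$, for every state $\o$ on $\rK$. Since ``the classical unitary action of $U$ on $\rH$'' is by definition nothing but the family of quantum channels $\rL_U$ obtained as $\o$ ranges over all states of $\rK$, the equality $\rL_U=\rL_V$ for all $\o$ is precisely the assertion that $U$ and $V$ give the same classical unitary action from $\rK$ onto $\rH$; that $V$ is itself a classical unitary is already guaranteed by Proposition~\ref{P:stat_unit} applied with $e_i=f_i$.

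There is essentially no obstacle: the content of the statement is just the remark that the partial trace over $\rK$ only sees the ``diagonal'' part $\vert e_i\rangle\langle e_i\vert$ of $U$, so the output orthonormal basis $(e_i)$ is irrelevant and may be taken equal to the input basis $(f_i)$. The only mild point to keep in mind is to confirm that $V$ is genuinely unitary, so that $\rL_V$ is a bona fide quantum channel and the comparison is meaningful — but, as noted, this is immediate from orthonormality of $(f_i)$, exactly as in the proof of Proposition~\ref{P:stat_unit}.
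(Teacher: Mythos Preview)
Your proposal is correct and follows exactly the line the paper intends: the corollary is stated there as an immediate consequence of the computation in Proposition~\ref{P:stat_unit}, and you have simply spelled out that the formula $\rL(\r)=\sum_i \langle f_i\vert\o\vert f_i\rangle\,U_i\r U_i^*$ depends only on $(f_i)$ and $(U_i)$, hence is unchanged when $(e_i)$ is replaced by $(f_i)$.
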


As a consequence we shall now concentrate on those unitary actions of the form (\ref{E:finalU}) only. 

\smallskip
The aim of this article is to completely characterize such ``random unitary operators'' (or \emph{classical unitary interactions}) on $\rH\otimes\rK$ and their continuous-time limits, to express them in terms of multiplication operators by a natural basis of random processes, namely the \emph{complex obtuse random variables} in discrete time and \emph{the complex normal martingales} in continuous time.

\smallskip
Once again, note that $U$ is not explicitly a random unitary operator, that is, a random variable with values in the unitary group $\rU(\rH)$. It is a deterministic unitary operator on $\rH\otimes\rK$ whose action on the states of $\rH$ oughts to the same quantum physics as the action of a random unitary operator on $\rH$. In the same way as we have discussed previously: density matrices have two different interpretations, describing the same physics.

\section{Complex Obtuse Random Variables}

We shall show that the structure of those classical unitary interactions is intimately related to a particular family of random variables, the \emph{complex obtuse random variables}. Real obtuse random variables were introduced in \cite{A-E}, for they appear naturally in the theory of discrete-time normal martingales, predictable and chaotic representation property, for vector-valued martingales. The extension to the complex case was studied in \cite{ADP}, for it is more compatible with the physical situation. Note that the extension from the real to the complex case is far from obvious. 

This section is devoted to the presentation the main results of \cite{ADP} on complex obtuse random variables and their associated $3$-tensors which shall be necessary for the description of classical unitary interactions.

\subsection{Obtuse Systems and Obtuse Random Variables}

\begin{definition}
An \emph{obtuse system} in $\CC^N$ is a family of $N+1$ vectors $v_1,\ldots, v_{N+1}$ such that
$$
\ps{v_i}{v_j}=-1
$$
for all $i\not =j$. 
This family can be embedded into an orthogonal family of $N+1$ vectors in $\CC^{N+1}$ defined by
$$
\wh v_i=\left(
\begin{matrix}
 1 \\
 v_i   
\end{matrix}
\right)\in\CC^{N+1}\,.
$$
To obtuse systems, one can associate a natural probability distribution
$$
p_i=\frac{1}{\normca{\wh v_i}}=\frac 1{1+\normca{v_i}}\,,
$$
giving the following properties
\begin{equation}\label{proba}
\sum_{i=1}^{N+1} p_i=1\,,\quad  \sum_{i=1}^{N+1} p_i\,v_i=0\,,\quad\text{and }\,,\quad \sum_{i=1}^{N+1} p_i\,\vert v_i\rangle\langle v_i\vert=I_{\CC^N}\,.
\end{equation}
\end{definition}

\begin{proposition}\label{unitary}
Let $\{v_1,\ldots, v_{N+1}\}$ be an obtuse system of $\CC^N$ having $\{p_1,\ldots, p_{N+1}\}$ as associated probabilities. Then the following assertions are equivalent.

\smallskip\noindent
i) The family $\{w_1,\ldots, w_{N+1}\}$ is an obtuse system on $\CC^N$ with same respective probabilities $\{p_1,\ldots, p_{N+1}\}$.

\smallskip\noindent 
ii) There exists a unitary operator $U$ on $\CC^N$ such that $w_i=Uv_i$, for all $i=1,\ldots, N+1$.  
\end{proposition}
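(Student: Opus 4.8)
The plan is to handle the implication ii) $\Rightarrow$ i) by a one-line computation and then to derive i) $\Rightarrow$ ii) by transporting the problem to $\CC^{N+1}$ via the embedding $v_i\mapsto\wh v_i$. For ii) $\Rightarrow$ i): if $w_i=Uv_i$ with $U$ unitary on $\CC^N$, then $\ps{w_i}{w_j}=\ps{Uv_i}{Uv_j}=\ps{v_i}{v_j}=-1$ for $i\neq j$, so $\{w_1,\ldots,w_{N+1}\}$ is an obtuse system; moreover $\normca{w_i}=\normca{Uv_i}=\normca{v_i}$, so its associated probabilities $1/(1+\normca{w_i})$ are exactly the $p_i$.

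For the converse, the first step is to observe that the lifted vectors $\wh v_1,\ldots,\wh v_{N+1}$ are pairwise orthogonal, since $\ps{\wh v_i}{\wh v_j}=1+\ps{v_i}{v_j}=0$ for $i\neq j$, and nonzero; as there are $N+1$ of them in the $(N+1)$-dimensional space $\CC^{N+1}$, they form an orthogonal basis of $\CC^{N+1}$, with $\normca{\wh v_i}=1+\normca{v_i}=1/p_i$. The same holds for the family $(\wh w_i)$, and since the two obtuse systems share the probabilities $p_i$, we get $\norme{\wh w_i}=\norme{\wh v_i}$ for every $i$. Hence $\bigl(\wh v_i/\norme{\wh v_i}\bigr)_{i}$ and $\bigl(\wh w_i/\norme{\wh w_i}\bigr)_{i}$ are two orthonormal bases of $\CC^{N+1}$, so the linear map $\wh U$ determined by $\wh U\,\wh v_i=\wh w_i$ is a unitary operator on $\CC^{N+1}$.

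It then remains only to check that $\wh U$ fixes the vector $\Omega\in\CC^{N+1}$ whose first coordinate is $1$ and whose remaining coordinates vanish. Granting this, $\wh U$ also leaves $\Omega^{\perp}=\{0\}\oplus\CC^N$ invariant (one has $\wh U^{*}\Omega=\Omega$ as well), so $\wh U=1\oplus U$ for some unitary $U$ on $\CC^N$, and the relation $\wh U\,\wh v_i=\wh w_i$ reads $\binom{1}{Uv_i}=\binom{1}{w_i}$, i.e. $w_i=Uv_i$, as desired. To see the invariance of $\Omega$, note that the first two identities of \eqref{proba} say precisely that $\Omega=\sum_{i=1}^{N+1} p_i\,\wh v_i$, and likewise $\Omega=\sum_{i=1}^{N+1} p_i\,\wh w_i$ since the $w_i$ form an obtuse system with the same probabilities; therefore $\wh U\Omega=\sum_i p_i\,\wh U\,\wh v_i=\sum_i p_i\,\wh w_i=\Omega$.

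The argument is essentially routine; the one point that is not pure bookkeeping is the recognition that $\Omega$ is the barycenter $\sum_i p_i\,\wh v_i$ of the lifted system, which is exactly what forces $\wh U$ to act as the identity on the first coordinate and hence to come from a genuine unitary on $\CC^N$. Beyond that, the only thing to watch is the inner-product convention when expanding $\ps{\wh v_i}{\wh v_j}$ and when verifying $\wh U^{*}\Omega=\Omega$.
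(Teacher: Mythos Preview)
Your proof is correct. The paper itself does not prove this proposition; it is recalled from \cite{ADP} without argument, so there is nothing to compare against. Your route via the lifted orthogonal bases $(\wh v_i)$ and $(\wh w_i)$ in $\CC^{N+1}$, together with the observation that $\Omega=\sum_i p_i\,\wh v_i=\sum_i p_i\,\wh w_i$ forces the unitary $\wh U$ to fix $\Omega$ and hence to split as $1\oplus U$, is clean and self-contained.
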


\begin{definition}
An \emph{obtuse random variable} $X$ in $\CC^{N}$ is a random variable taking its values $v_1,\ldots,v_{N+1}$ in an obtuse system associated to its natural probability distribution $p_1,\ldots, p_{N+1}$. Because of \eqref{proba}, the random variable $X$ is \emph{centered}, that is, we have $\EE[X^i]=0$ for all $i$, and \emph{normalized}, that is, we have
$$
\mbox{\rm cov}(X^i,X^j)=\EE[\ol{X^i}\, X^j]-\EE[\ol{X^i}]\,\EE[X^j]=\delta_{i,j}\,,
$$
where $X^1, \ldots, X^N$ denote the coordinates of $X$ in $\CC^N$. 

The random variable $X$ can be defined on its canonical space $(\O, \rF, \PP)$, that is, $\O=\left\{1, \dots, N+1\right\}$, $\rF$ is the full $\s$-algebra of $\O$, the probability measure $\PP$ is given by $\PP \left(\left\{i\right\}\right)= p_i$ and $X$ is given by $X(i)=v_i$. The coordinates of $v_i$ are also denoted by $v_i^k$, for $k=1,\ldots, N$, so that $X^k(i)=v^k_i$. 
\end{definition}

We shall also consider the deterministic variable $X^0$ on $(\O, \rF, \PP)$, which is always equal to $1$. 

%
%
%
%
%

\subsection{Associated 3-Tensors}

We present now how obtuse random variables are naturally related to some 3-tensors with particular symmetries. These 3-tensors are essential to characterize the classical unitary evolutions, as well as their continuous time limit in the scheme of repeated quantum interactions.

\begin{definition}
A \textit{3-tensor} $S$ on $\CC^{N+1}$  is a linear map from $\CC^{N+1}$ to $\CC^{N+1}\otimes \CC^{N+1}$, that is, a collection of coefficients $(S^{ij}_k)_{i,j,k=0}^{N}$ defining the following action on $\CC^{N+1}$:
$$
(S(x))^{ij} = \sum_{k=0}^{N} S^{ij}_k x^k\,.
$$
\end{definition}

We shall see below that obtuse random variables on $\CC^N$ have a naturally associated 3-tensor on $\CC^{N+1}$. This associated 3-tensor is related to the notion of complex doubly-symmetric 3-tensor.

\begin{definition} 
A 3-tensor $(S^{ij}_k)_{i,j,k=0}^N$ on $\CC^{N+1}$ is called a complex doubly symmetric $3$-tensor if it satisfies 
\begin{equation}\label{E:sym1}
S^{ij}_k \ \ \mbox{is symmetric in}\ \ (i,j)\,,
\end{equation}
\begin{equation}\label{E:sym2}
\sum_{m=0}^N {S^{im}_j}\,S^{kl}_m\ \ \mbox{is symmetric in}\ \ (i,k)\,,
\end{equation}
\begin{equation}\label{E:sym3}
\sum_{m=0}^N S^{im}_j\,\ol{S^{lm}_k}\ \ \mbox{is symmetric in}\ \ (i,k)\,.
\end{equation}
\end{definition}

Complex doubly-symmetric 3-tensors happen to be exactly those 3-tensors which are diagonalizable in some orthonormal basis. That is, the exact generalization to 3-tensors of being a normal matrix for 2-tensors. This non-trivial result is proved in \cite{ADP}. 
One can also express this diagonalization in terms of obtuse random variables, this is the theorem we shall retain here.

\begin{theorem}\label{P:STdiscret}
Let $X$ be an obtuse random variable in $\CC^N$. Then there exists a unique complex doubly symmetric 3-tensor $S$ on $\CC^{N+1}$ such that
\begin{equation}\label{E:XX=TX}
X^i\, X^j=\sum_{k=0}^N S^{ij}_k\, X^k\,,
\end{equation}
for all $i,j=0,\ldots, N$. 
This complex doubly symmetric $3$-tensor $S$ is given by
\begin{equation}\label{E:Tijk}
S^{ij}_k=\EE[{X^i}\, X^j\, \ol{X^k}]\,,
\end{equation}
for all $i,j,k=0,\ldots N$ and satisfies also
\begin{equation}\label{E:sym0}
S^{i0}_k=\d_{ik}\,.
\end{equation}

Conversely, to each complex doubly symmetric $3$-tensor on $\CC^N$ satisfying  \eqref{E:sym0}, one can attach a unique (in law) obtuse random variable $X$ on $\CC^N$ such that (\ref{E:XX=TX}) and (\ref{E:Tijk}) hold.
\end{theorem}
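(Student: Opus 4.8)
The plan is to split the statement into its ``direct'' half (an obtuse $X$ produces such an $S$) and its ``converse'' half (such an $S$ produces an obtuse $X$), the first being essentially elementary and the second resting on the structural result recalled just above. The key preliminary observation for the direct half is that, on the canonical space $(\O,\rF,\PP)$, the $N+1$ variables $X^0,X^1,\dots,X^N$ form an orthonormal basis of $L^2(\O,\PP)$: indeed $\dim_\CC L^2(\O,\PP)=N+1$, and reading the three identities in \eqref{proba} coordinatewise (with $v_l^0=1$) gives, for all $i,k\in\{0,\dots,N\}$,
$$\EE\!\left[X^i\,\ol{X^k}\right]=\sum_{l=1}^{N+1}p_l\,v_l^i\,\ol{v_l^k}=\d_{ik},$$
which is the orthogonality of the embedded vectors $\wh v_l$ together with $p_l=\normca{\wh v_l}^{-1}$; being $N+1$ orthonormal vectors, the $(X^k)_{k=0}^N$ form a basis.

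For the direct half, the pointwise product $X^iX^j$ again lies in $L^2(\O,\PP)$, hence has a unique expansion $X^iX^j=\sum_{k=0}^N S^{ij}_k\,X^k$; this is \eqref{E:XX=TX}, and pairing it with $X^k$ and using the orthonormality above yields $S^{ij}_k=\EE[X^iX^j\ol{X^k}]$, which is \eqref{E:Tijk}. Then \eqref{E:sym0} is immediate since $X^0\equiv1$, and \eqref{E:sym1} is trivial from $X^iX^j=X^jX^i$. For \eqref{E:sym2} and \eqref{E:sym3} I would insert \eqref{E:Tijk} into the contractions and use $\sum_{m=0}^N v_a^m\,\ol{v_b^m}=\normca{\wh v_a}\,\d_{ab}=p_a^{-1}\,\d_{ab}$ to collapse one of the two sums, getting
$$\sum_{m=0}^N S^{im}_j\,S^{kl}_m=\sum_{a=1}^{N+1}p_a\,v_a^i\,v_a^k\,v_a^l\,\ol{v_a^j},\qquad \sum_{m=0}^N S^{im}_j\,\ol{S^{lm}_k}=\sum_{a=1}^{N+1}p_a\,v_a^i\,v_a^k\,\ol{v_a^j}\,\ol{v_a^l},$$
both of which are visibly symmetric in $(i,k)$. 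Uniqueness of $S$ is just uniqueness of the expansion in the basis $(X^k)$.

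For the converse half, start from a complex doubly-symmetric $S$ on $\CC^{N+1}$ with $S^{i0}_k=\d_{ik}$. By the diagonalization theorem of \cite{ADP} there is an orthonormal basis $(u_1,\dots,u_{N+1})$ of $\CC^{N+1}$ and scalars $\mu_a$ with $S(u_a)=\mu_a\,u_a\otimes u_a$; replacing each $u_a$ by a unit-modulus multiple one may take $\mu_a>0$. Evaluating $S^{i0}_k=\d_{ik}$ on the $u_a$ forces $u_a^0=1/\mu_a>0$, so, putting $p_a=1/\mu_a^{2}$ and $\wh v_a=\mu_a u_a$, one gets $\wh v_a^0=1$, $\sum_a p_a=\normca{e_0}=1$, and $\langle\wh v_a,\wh v_b\rangle=\mu_a\mu_b\,\d_{ab}$; hence the vectors $v_a\in\CC^N$ obtained by deleting the $0$-th coordinate of $\wh v_a$ form an obtuse system with associated probabilities $p_a$. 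The obtuse random variable $X$ on $\{1,\dots,N+1\}$ defined by $\PP(\{a\})=p_a$, $X(a)=v_a$ then has $3$-tensor $\sum_a p_a\,\wh v_a^i\,\wh v_a^j\,\ol{\wh v_a^k}=S^{ij}_k$, using $S(e_k)=\sum_a\ol{u_a^k}\,\mu_a\,u_a\otimes u_a$; so \eqref{E:XX=TX}--\eqref{E:Tijk} hold by the direct half. Finally $X$ is unique in law, since from $S$ alone one recovers the eigenvalues $\mu_a=1/\sqrt{p_a}$ and then the eigenvectors $u_a$ (whose phases are pinned down by $u_a^0>0$), hence the $v_a$ and $p_a$.

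The genuine obstacle is exactly this diagonalization input used in the converse. Algebraically, \eqref{E:sym1} makes $e_i\star e_j:=\sum_k S^{ij}_k\,e_k$ a commutative product on $\CC^{N+1}$, \eqref{E:sym2} makes it associative and \eqref{E:sym0} makes $e_0$ its unit; the role of \eqref{E:sym3} is then to supply a compatible involution forcing this finite-dimensional commutative algebra to be semisimple, i.e. isomorphic to $\CC^{N+1}$ with coordinatewise product, which is the non-trivial ``spectral theorem for normal $3$-tensors'' of \cite{ADP}. I would quote it rather than reprove it here; granted it, all the steps above are routine bookkeeping.
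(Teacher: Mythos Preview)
The paper does not give a proof of this theorem at all: it is stated in Section~3 as a result recalled from \cite{ADP}, with the remark that the diagonalization of complex doubly-symmetric $3$-tensors is ``non-trivial'' and ``proved in \cite{ADP}''. So there is no in-paper argument to compare against.

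Your argument is correct and matches the spirit of the paper's presentation. The direct half is exactly the elementary computation one expects: orthonormality of $(X^k)_{k=0}^N$ in $L^2(\O,\PP)$ gives existence, uniqueness and formula \eqref{E:Tijk}; your collapse of the contractions in \eqref{E:sym2}--\eqref{E:sym3} via $\sum_{m=0}^N v_a^m\,\ol{v_b^m}=p_a^{-1}\d_{ab}$ is clean and correct. For the converse you do precisely what the paper does implicitly, namely invoke the diagonalization theorem from \cite{ADP} and then read off the obtuse system from the eigenvectors. One small point of ordering: you should note that $\mu_a\neq 0$ \emph{before} normalizing the phase to $\mu_a>0$; this follows already from $\mu_a u_a^0=1$, which holds for the unnormalized (complex) eigenvalues by the same argument you give. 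Your uniqueness claim is also fine once one observes that for a $3$-tensor the ``eigenvector'' condition $S(u)=\mu\, u\otimes u$ is quadratic, so eigenvectors with equal $\mu$ do not mix linearly and the diagonalization is rigid up to phase and permutation; the phase is then pinned down by $u_a^0>0$.
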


Another useful relation satisfied by the complex doubly symmetric $3$-tensor attached to obtuse random variable $X$  is the following:
\begin{equation}\label{E:XbX}
\ol{X^i}\,X^j=\sum_{k=0}^N \ol{S^{ik}_j}\, X^k\,.
\end{equation}
for all $i,j=0,\ldots, N$ 

\smallskip
Now we have presented the relation between obtuse random variables and doubly-symmetric $3$-tensors, we describe the main motivation of the introduction of $3$-tensors which is the representation of multiplication operators.

\subsection{Representation of Multiplication Operators}\label{SS:mult}

\begin{definition}
Let $X$ be an obtuse random variable in $\CC^N$, with associated 3-tensor $S$ and let $(\O,\rF,\PP_X)$ be the canonical space of $X$. The space $L^2(\O,\rF,\PP_X)$ is a $N+1$-dimensional Hilbert space and the family $\{{X^0}, {X^1},\ldots, {X^N}\}$ is an orthonormal basis of that space. Hence for every obtuse random variable $X$, we have a natural unitary operator
$$
\begin{matrix} W_X&:&L^2(\O,\rF,\PP_X)&\longrightarrow&\CC^{N+1}\\
&&{X^i}&\longmapsto&e_i\,,
\end{matrix}
$$
where $\{e_0,\ldots, e_N\}$ is the canonical orthonormal basis of $\CC^{N+1}$. The operator $W_X$ is called the \emph{canonical isomorphism} associated to $X$. 
\end{definition}

The interesting point with these isomorphisms $W_X$ is that they canonically transport all the obtuse random variables of $\CC^N$ onto a common canonical space. The point is that the probabilistic informations concerning the random variable $X$ are not correctly transferred via this isomorphism: all the informations about the law, the independencies, ... are lost when identifying $X^i$ to $e_i$. The only way to recover the probabilistic informations about the $X^i$'s on $\CC^{N+1}$ is to consider the \emph{multiplication operator} by $X^i$, defined as follows.

\begin{definition}
On the space $L^2(\O,\rF,\PP_X)$, for each $i=0,\ldots, N$, we consider the multiplication operator
$$
\begin{matrix} \rM_{X^i}&:&L^2(\O,\rF,\PP_X)&\longrightarrow&L^2(\O,\rF,\PP_X)\\
&&{Y}&\longmapsto&X^i\,Y\,.
\end{matrix}
$$
\end{definition}

\begin{definition}
On the space $\CC^{N+1}$, with canonical basis $\{e_0,\ldots, e_N\}$ we consider the basic matrices $a^i_j$, for $i,j=0,\ldots, N$ defined by
$$
a^i_j\, e_k=\delta_{i,k}\, e_j\,.
$$
\end{definition}

We shall see now that, when carried out on the same canonical space by $W_X$, the obtuse random variables of $\CC^N$ admit a simple and compact matrix representation in terms of their 3-tensor.

\begin{theorem}\label{T:multop}
Let $X$ be an obtuse random variable on $\CC^N$, with associated 3-tensor $S$ and canonical isomorphism $W_X$. Then we have, for all $i=0,\ldots, N$
\begin{equation}\label{UXU}
W_X\, \rM_{X^i}\, W_X^*=\sum_{j,k=0}^N S^{ij}_k\, a^j_k\,.
\end{equation}
The operator of multiplication by $\ol{X^i}$ is given by
\begin{equation}\label{olXX}
W_X\,\rM_{\ol{X^i}}\, W_X^*=\sum_{j,k=0}^N \ol{S^{ik}_j}\, a^j_k\,.
\end{equation}
\end{theorem}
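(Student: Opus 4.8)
The plan is to compute directly the matrix of each multiplication operator in the orthonormal basis $\{X^0,\ldots,X^N\}$ of $L^2(\O,\rF,\PP_X)$ and to recognize the result as the announced combination of basic matrices. The starting point is that, by construction of $W_X$, one has $W_X\,X^m=e_m$ for every $m$, hence also $W_X^*\,e_m=X^m$; and $\{X^0,\ldots,X^N\}$ is genuinely an orthonormal basis, this being precisely the content of $X$ being centered and normalized together with $X^0\equiv 1$. So $W_X$ is a well-defined unitary and it is enough to evaluate $W_X\,\rM_{X^i}\,W_X^*$ and $W_X\,\rM_{\ol{X^i}}\,W_X^*$ on each canonical vector $e_m$.

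For \eqref{UXU} I would write
$$
W_X\,\rM_{X^i}\,W_X^*\,e_m = W_X\big(\rM_{X^i}\,X^m\big)=W_X\big(X^i\,X^m\big),
$$
and then invoke relation \eqref{E:XX=TX} of Theorem \ref{P:STdiscret}, which gives $X^i\,X^m=\sum_{k=0}^N S^{im}_k\,X^k$, so that $W_X(X^i X^m)=\sum_{k=0}^N S^{im}_k\,e_k$. On the other hand, the defining property $a^j_k\,e_m=\delta_{j,m}\,e_k$ of the basic matrices yields
$$
\Big(\sum_{j,k=0}^N S^{ij}_k\,a^j_k\Big)\,e_m=\sum_{j,k=0}^N S^{ij}_k\,\delta_{j,m}\,e_k=\sum_{k=0}^N S^{im}_k\,e_k .
$$
Since the two sides agree on every $e_m$, identity \eqref{UXU} follows.

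For \eqref{olXX} the argument is identical, with relation \eqref{E:XbX} playing the role of \eqref{E:XX=TX}: one gets $W_X\,\rM_{\ol{X^i}}\,W_X^*\,e_m=W_X(\ol{X^i}\,X^m)=\sum_{k=0}^N \ol{S^{ik}_m}\,e_k$, while $\big(\sum_{j,k}\ol{S^{ik}_j}\,a^j_k\big)e_m=\sum_{k}\ol{S^{ik}_m}\,e_k$, and the two expressions coincide. Alternatively, \eqref{olXX} can be deduced from \eqref{UXU} by taking adjoints: since $\rM_{\ol{X^i}}=(\rM_{X^i})^*$ on $L^2(\O,\rF,\PP_X)$ and $(a^j_k)^*=a^k_j$, the adjoint of $\sum_{j,k}S^{ij}_k\,a^j_k$ is $\sum_{j,k}\ol{S^{ij}_k}\,a^k_j=\sum_{j,k}\ol{S^{ik}_j}\,a^j_k$ after relabelling, which is at least a useful consistency check.

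There is no real obstacle here: all the mathematical substance is already packaged into Theorem \ref{P:STdiscret} and relation \eqref{E:XbX}, and what remains is only careful bookkeeping of the indices $(i,j,k)$ and of the convention for $a^j_k$. The one point to keep an eye on is that the sums in \eqref{UXU}--\eqref{olXX} run over $0,\ldots,N$, the index $0$ included; the component $X^0\equiv 1$ must be handled on exactly the same footing as the others, which is precisely why the associated $3$-tensor lives on $\CC^{N+1}$ rather than on $\CC^N$.
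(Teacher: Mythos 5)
Your proof is correct. The paper itself gives no proof of Theorem \ref{T:multop} — it is stated without argument as one of the results recalled from \cite{ADP} — so there is nothing in the text to compare against, but your computation is exactly the natural one: reduce to the structure relations \eqref{E:XX=TX} and \eqref{E:XbX}, conjugate by $W_X$, and match coefficients in the basis $\{e_0,\ldots,e_N\}$ using $a^j_k e_m=\delta_{j,m}e_k$. The adjoint argument $\rM_{\ol{X^i}}=(\rM_{X^i})^*$, $(a^j_k)^*=a^k_j$ is a clean alternative derivation of \eqref{olXX} from \eqref{UXU} and incidentally re-proves \eqref{E:XbX} from \eqref{E:XX=TX}, which is a nice economy.
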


\section{Representation of Classical Unitary Actions}

Coming back to classical unitary actions of $\rK$ on $\rH$, it happens that obtuse random variables appear naturally in the representation of the unitary operator. This is the main result of this article. 

\subsection{Multiplication Operator Representation}

For the moment we assume that the reference state of $\rK$ in which the probabilities of $U$ are computed is the pure state $e_0$, first vector of the canonical orthonormal basis of $\rK$. Moreover, we assume that $\ps{\phi_i}{e_0}\not=0$ for all $i$. 

We shall see afterwards how all the other cases for the reference state $\o$ on $\rK$ can be boiled down to this particular case. 

\begin{theorem}\label{representation}
On the space $\rK=\CC^{N+1}$, let $U=\sum_{i=1}^{N+1} U_i\otimes \vert\phi_i\rangle\langle\phi_i\vert$ be a unitary operator on $\rH\otimes \rK$ acting classically on $\rH$. Assume that $\ps{\phi_i}{e_0}\not=0$ for all $i$. Then there exists an obtuse random variable $X$ on $\CC^N$, with associated 3-tensor $S$ and canonical isomorphism $W_X$, such that
\begin{equation}\label{E:reprU}
U=A\otimes I+\sum_{j=1}^N B_j\otimes W_X\,\rM_{X^j}\, W_X^*
\end{equation}
for some operators $A, B_1, \ldots, B_N$ on $\rH$. 
 
More precisely, the obtuse random variable $X$ is the one associated to the  probability distribution $\{\ab{\ps{e_0}{\phi_i}}^2\,,\ i=1,\ldots, N+1\}$ and to the values
\begin{equation}\label{vv}
v_i^j=\frac{\ps{\phi_i}{e_j}}{\ps{\phi_i}{e_0}}\,,
\end{equation}
$i=1,\ldots, N+1$, $j=1,\ldots, N$.  The operators $A$ and $B_j$ are given by
\begin{equation}\label{E:reprUA}
A=\sum_{i=1}^{N+1} p_i\,U_i
\end{equation}
and 
\begin{equation}\label{E:reprUB}
B_j=\sum_{i=1}^{N+1} p_i\,\ol{v^j_i}\,U_i\,.
\end{equation}
\end{theorem}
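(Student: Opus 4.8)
The plan is to transport the operator $U$ onto the canonical space of a suitable obtuse random variable, where it becomes a transparent linear combination of multiplication operators, and then to recover the coefficients $A$ and $B_j$ by inverting a small system of operator equations with the help of the defining property of an obtuse system. First I would pin down the candidate random variable. Since $U$ is unitary of the stated form, $\{\phi_1,\ldots,\phi_{N+1}\}$ is necessarily an orthonormal basis of $\rK=\CC^{N+1}$. Put $p_i=\ab{\ps{e_0}{\phi_i}}^2=\ab{\ps{\phi_i}{e_0}}^2$, so that $\sum_i p_i=\norme{e_0}^2=1$ and $p_i>0$ by hypothesis, and set $v_i^0=1$ and $v_i^j=\ps{\phi_i}{e_j}/\ps{\phi_i}{e_0}$ for $j=1,\ldots,N$; equivalently $\wh v_i=\ps{\phi_i}{e_0}^{-1}\,\ol{\phi_i}$, where $\ol{\phi_i}$ denotes the vector obtained by conjugating the coordinates of $\phi_i$ in the basis $(e_j)$. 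A short computation gives
$$
\ps{\wh v_i}{\wh v_k}=\frac{\ps{\ol{\phi_i}}{\ol{\phi_k}}}{\ol{\ps{\phi_i}{e_0}}\,\ps{\phi_k}{e_0}}=\frac{\ps{\phi_k}{\phi_i}}{\ol{\ps{\phi_i}{e_0}}\,\ps{\phi_k}{e_0}}=\frac{\d_{ik}}{p_i}\,,
$$
hence $\ps{v_i}{v_k}=-1$ whenever $i\neq k$, so that $\{v_1,\ldots,v_{N+1}\}$ is an obtuse system in $\CC^N$ whose natural probability distribution is exactly $(p_i)$. Let $X$ be the associated obtuse random variable, defined on its canonical space $(\Omega,\rF,\PP_X)$ with $\Omega=\{1,\ldots,N+1\}$ and $X^k(i)=v_i^k$, and let $S$ and $W_X\colon L^2(\Omega,\rF,\PP_X)\to\CC^{N+1}=\rK$ be its $3$-tensor and canonical isomorphism.

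Next I would compute how $W_X$ transports the data of $U$. The key point is the identity $W_X(\indic_{\{i\}})=\ps{\phi_i}{e_0}\,\phi_i$, obtained by expanding $\indic_{\{i\}}$ in the orthonormal basis $\{X^0,\ldots,X^N\}$ of $L^2(\Omega,\rF,\PP_X)$: this gives $\indic_{\{i\}}=p_i\sum_k\ol{v_i^k}\,X^k$, whence $W_X(\indic_{\{i\}})=p_i\,\ol{\wh v_i}=(p_i/\ol{\ps{\phi_i}{e_0}})\,\phi_i=\ps{\phi_i}{e_0}\,\phi_i$, using $p_i=\ab{\ps{\phi_i}{e_0}}^2$. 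It follows that $W_X^*\phi_i=\ps{\phi_i}{e_0}^{-1}\,\indic_{\{i\}}$, so that the rank one projections appearing in $U$ satisfy
$$
P_i:=W_X^*\,\vert\phi_i\rangle\langle\phi_i\vert\,W_X=\frac{1}{p_i}\,\vert\indic_{\{i\}}\rangle\langle\indic_{\{i\}}\vert\,,
$$
the orthogonal projection onto $\CC\,\indic_{\{i\}}$, acting on $L^2(\Omega,\rF,\PP_X)$ by $P_iY=Y(i)\,\indic_{\{i\}}$. These projections diagonalize the multiplication operators: for every $j=0,\ldots,N$ one has $\rM_{X^j}=\sum_{i=1}^{N+1}v_i^j\,P_i$, with $v_i^0=1$ so that $\rM_{X^0}=I$. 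Moreover $(I\otimes W_X^*)\,U\,(I\otimes W_X)=\sum_{i=1}^{N+1}U_i\otimes P_i$.

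Finally I would invert the relation $\rM_{X^j}=\sum_i v_i^j\,P_i$. Writing $V=(v_i^j)$ for the $(N+1)\times(N+1)$ matrix with row index $j\in\{0,\ldots,N\}$ and column index $i\in\{1,\ldots,N+1\}$, the orthogonality relation $\ps{\wh v_i}{\wh v_k}=\d_{ik}/p_i$ established above says precisely that $V^*V=\mathrm{diag}(1/p_1,\ldots,1/p_{N+1})$; hence $V$ is invertible with $V^{-1}=\mathrm{diag}(p_1,\ldots,p_{N+1})\,V^*$, so that $P_i=p_i\sum_{j=0}^N\ol{v_i^j}\,\rM_{X^j}$. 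Substituting this into $(I\otimes W_X^*)\,U\,(I\otimes W_X)=\sum_i U_i\otimes P_i$ and collecting the coefficient of each $\rM_{X^j}$ yields
$$
(I\otimes W_X^*)\,U\,(I\otimes W_X)=\Big(\sum_{i=1}^{N+1}p_i\,U_i\Big)\otimes I+\sum_{j=1}^N\Big(\sum_{i=1}^{N+1}p_i\,\ol{v_i^j}\,U_i\Big)\otimes \rM_{X^j}\,;
$$
conjugating back by $I\otimes W_X$ gives \eqref{E:reprU} with $A$ and $B_j$ exactly as in \eqref{E:reprUA} and \eqref{E:reprUB}. (If one prefers, Theorem \ref{T:multop} allows one to re-express each $W_X\,\rM_{X^j}\,W_X^*$ explicitly through the $3$-tensor $S$.)

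I expect the main obstacle to be the bookkeeping relating the three Hilbert spaces in play: establishing the identity $W_X(\indic_{\{i\}})=\ps{\phi_i}{e_0}\,\phi_i$ cleanly — it is this identity that forces the probabilities to be $\ab{\ps{e_0}{\phi_i}}^2$ and the values to be those of \eqref{vv} — and the orthogonality relation $\ps{\wh v_i}{\wh v_k}=\d_{ik}/p_i$, which is the single place where the obtuse structure $\ps{v_i}{v_j}=-1$ ($i\neq j$) is used and which is exactly what makes $V$ invertible with the explicit inverse above. The remaining manipulations are routine linear algebra.
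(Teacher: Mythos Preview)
Your argument is correct and complete, but it follows a genuinely different route from the paper's own proof. The paper works entirely in coordinates: it writes each projection $\vert\phi_i\rangle\langle\phi_i\vert$ as $p_i I$ plus a remainder matrix, then verifies by a direct computation on a generic vector of $\CC^{N+1}$ that this remainder coincides with (the $W_X$-conjugate of) the multiplication operator by $\sum_{k=1}^N\ol{v_i^k}\,X^k$. That verification is carried out termwise using the explicit formula $S^{kl}_m=\sum_\mu p_\mu\,v_\mu^k v_\mu^l\,\ol{v_\mu^m}$ for the 3-tensor together with the obtuse relation $\ps{v_i}{v_\mu}=-1+\d_{i\mu}(1+\normca{v_i})$. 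Your proof, by contrast, is structural: the single identity $W_X(\indic_{\{i\}})=\ps{\phi_i}{e_0}\,\phi_i$ shows at once that $W_X^*\,\vert\phi_i\rangle\langle\phi_i\vert\,W_X$ is the evaluation projector $P_i$ on $L^2(\Omega,\rF,\PP_X)$, after which the representation drops out of the elementary matrix inversion $P_i=p_i\sum_{j=0}^N\ol{v_i^j}\,\rM_{X^j}$, with the obtuse orthogonality $\ps{\wh v_i}{\wh v_k}=\d_{ik}/p_i$ doing all the work. Your approach bypasses the 3-tensor entirely and also makes explicit why the $v_i$ form an obtuse system with the stated probabilities (a point the paper leaves implicit); the paper's approach, on the other hand, keeps the 3-tensor $S$ visible throughout, which dovetails with the later use of $S$ in the repeated-interaction and continuous-time-limit sections.
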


One can also notice the following equalities:
\begin{eqnarray}
B_j&=&\sum_{i=1}^{N+1} p_i\,\ol{v^j_i}\,U_i\,\\
&=&\sum_{i=1}^{N+1}\ps{\phi_i}{e_j}\overline{\ps{\phi_i}{e_0}}\,U_i\\
&=&\sum_{i=1}^{N+1}U_i\,\langle e_0\vert\vert \phi_i\rangle\langle \phi_i\vert\vert e_j\rangle\\
&=&Tr_\mathcal K[U\,I\otimes \vert e_j\rangle\langle e_0\vert]\\
A&=&Tr_\mathcal K[U\,I\otimes \vert e_0\rangle\langle e_0\vert]
\end{eqnarray}

\smallskip\noindent
{\bf Remark:} Before proving the theorem above, it is important to make the following remark. The obtuse system, the operators $B_j$, described above, seem to depend highly on the choice of the orthonormal basis $\{e_0, e_1,\ldots,e_N\}$ of $\CC^{N+1}$ which has been chosen at the begining. Let us show that this is not the case. The construction above depends clearly on the choice of $e_0$, as it corresponds to the choice of the state of the environment, that is, it dictates the probability distribution one should find, but it does not depend on the way $e_0$ is completed into an orthonormal basis of $\CC^{N+1}$. 

In order to make that clear, let us see what happens if we change the orthonormal basis $\{e_0, e_1,\ldots,e_N\}$ into another one $\{e_0, f_1,\ldots,f_N\}$. In particular there exists a unitary operator $U$ of $\CC^N$ such that $f_j=U\,e_j$ for all $j=1,\ld, N$; or else $f_j=\sum_{k=1}^N U_{kj}\, e_k$. Let us denote by $\wt X^i$ the coordinates of $X$ in this new basis $\{f_1,\ldots, f_N\}$ of $\CC^N$, that is, $\wt X^i=\sum_{k=1}^N U_{kj}\, X^k$. 

The canonical operator $W\,:\, \wt X^i\mapsto f_i$ remains the same as $W_X$, as can be checked easily. The probabilities $p_i$, obtained above, remain the same for they depend only on $e_0$. The obtuse system $\{v_1, \ldots, v_{N+1}$ is changed as follows:
$$
w^j_i=\frac{\ps{\phi_i}{f_j}}{\ps{\phi_i}{f_0}}=\sum_{k=1}^N U_{kj}\,\frac{\ps{\phi_i}{e_k}}{\ps{\phi_i}{e_0}}=
\sum_{k=1}^N U_{kj}\,v^k_i\,.
$$
In other words, all the $w_i$'s are image of the corresponding $v_i$'s under a unitary map of $\CC^N$. As was explained in Proposition \ref{unitary}, this kind of transformation links all obtuse systems with same associated probability measure.

The operators $A_j$ are unchanged, but the operators $B_j$ are also changed into
$$
\wt B_j=\sum_{k=1}^N \ol{U_{kj}}\, B_k\,.
$$
The representation \eqref{E:reprU} then becomes
\begin{align*}
U&=A\otimes I+\sum_{j=1}^N \wt B_j\otimes W_X\,\rM_{\wt X^j}\, W_X^*\\
&=A\otimes I+\sum_{j=1}^N \sum_{k,l=1}^N \ol{U_{kj}} \, U_{lj} B_k\otimes W_X\,\rM_{ X^l}\, W_X^*\\
&=A\otimes I+ \sum_{k,l=1}^N \left(\sum_{j=1}^N\ol{U_{kj}} \, U_{lj}\right) B_k\otimes W_X\,\rM_{ X^l}\, W_X^*\\
&=A\otimes I+ \sum_{k=1}^N  B_k\otimes W_X\,\rM_{ X^k}\, W_X^*\,,
\end{align*}
for $U$ is unitary. 

We recover the same representation as \eqref{E:reprU}. This ends the discussion as follows: only the choice of $e_0$ is important, it determines the probability measure of the associated obtuse random variable $X$. Changing the rest of the orthonormal basis, only changes the choice of the obtuse system, in the class of those sharing the same probability measure. The representation \eqref{E:reprU} is independant of this change of basis. 

\medskip
We now prove the theorem.
 
\begin{proof}
In the canonical basis of $\rK$, the vector $\phi_i$ is of the form 
$$
\phi_i=
\left(\begin{matrix}
 \ps{e_0}{\phi_i}   \\
  \vert w_i \rangle
\end{matrix}\right)
=\ps{e_0}{\phi_i}\left(\begin{matrix}
 1   \\
  \vert \ol{v_i}\rangle 
\end{matrix}\right)\,,
$$
where $\vert v_i\rangle$ is the vector of $\CC^N$ described in \eqref{vv}.
Putting $p_i=\ab{\ps{e_0}{\phi_i}}^2$, this gives
$$
\vert \phi_i\rangle\langle \phi_i\vert=p_i\left(\begin{matrix}1&\langle \ol{v_i}\vert\\\\\vert \ol{v_i}\rangle&\vert \ol{v_i}\rangle\langle\ol{v_i}\vert\end{matrix}\right).
$$
We can write it also as
$$
\vert \phi_i\rangle\langle \phi_i\vert=p_i\,I+p_i\,\left(\begin{matrix}0&\langle \ol{v_i}\vert\\\\\vert \ol{v_i}\rangle&\vert \ol{v_i}\rangle\langle \ol{v_i}\vert-I\end{matrix}\right).
$$
We shall concentrate on the action of the last matrix. On a generic vector it acts as follows
$$
\left(\begin{matrix}0&\langle \ol{v_i}\vert\\\\\vert \ol{v_i}\rangle&\vert \ol{v_i}\rangle\langle \ol{v_i}\vert-I\end{matrix}\right)\left(\begin{matrix} x\\\\\vert \o\rangle\end{matrix}\right)=\left(\begin{matrix}\ps{\ol{v_i}}{\o}\\\\x\,\vert \ol{v_i}\rangle+\ps{\ol{v_i}}{\o}\vert \ol{v_i}\rangle-\vert \o\rangle\end{matrix}\right)\,.
$$
In coordinates this gives
$$
\left(\begin{matrix}\sum_{l=1}^N{v_i^l}\,{\o^l}\\\\\left(x\, \ol{v_i^k}+\sum_{l=1}^N{v_i^l}\,{\o^l} \,\ol{v_i^k}- \o^k\right)_k\,\end{matrix}\right)\,.
$$

Let us compare it to the action of multiplication by $\sum_{k=1}^N \a_k\,X^k$ on the corresponding generic vector, in the basis $\{X^0, X^1,\ldots,X^N\}$, where $X$ is the obtuse random variable related to the $v_j$'s. We get
\begin{align*}
\left(\sum_{k=1}^N \a_k\, X^k\right)&\left(x\, X^0+\sum_{l=1}^N \o^l\, X^l\right)=\\
&=\sum_{k=1}^N\a_k\,x\, X^k+\sum_{k,l=1}^N\a_k\,\o^l\,X^k\,X^l\\
&=\sum_{k=1}^N\a_k\,x\, X^k+\sum_{k,l=1}^N\a_k\,\o^l\,\sum_{m=0}^NS^{kl}_m\,X^m\\
&=\sum_{k=1}^N\a_k\,x\, X^k+\sum_{k,l=1}^N \a_k\,\o^lS^{kl}_0\,X^0+\sum_{k,l,m=1}^N\a_k\,\o^l\,S^{kl}_m\,X^m
\end{align*}
By identification on the term in front of $X^k$ depending on $x$, we put $\a_k=\ol{v^k_i}$.
Then, with the relations \eqref{proba} and $S^{kl}_0=\EE[{X^k}\, X^l\, \ol{X^0}]$, we obtain the following term in front of $X^0$
\begin{align*}
\sum_{k,l=1}^N \ol{v^k_i}\,\o^lS^{kl}_0&= \sum_{k,l=1}^N \o^l \ol{v^k_i}\sum_{m=1}^{N+1} p_m v_m^k v_m^l\\
&=\sum_{l=1}^N \o^l \sum_{m=1}^{N+1} p_m v_m^l \sum_{k=1}^N \ol{v^k_i}v_m^k\\
&=\sum_{l=1}^N \left(\o^l \sum_{m=1}^{N+1} p_m v_m^l \left((-1)(1-\delta_{i,m}) + \normca{v_i}\,\delta_{i,m}\right)\right)\\
&=\sum_{l=1}^N \o^l p_iv_i^l + \sum_{l=1}^N \o^l p_i v_i^l \normca{v_i}\\
&=\sum_{l=1}^N \o^l p_i v_i^l(1+\normca{v_i}) = \sum_{l=1}^N \o^l v_i^l\,.
\end{align*}
Thus the terms in front of $X^0$ are equal. This gives 
\begin{align*}
\left(\sum_{k=1}^N \a_k \,X^k\right)&\left(x X^0+\sum_{l=1}^N \o^l \,X^l\right)=\\
&=\sum_{k=1}^N\ol{v^k_i}\,x\, X^k+\sum_{l=1}^N \o^l v_i^l\,X^0+\sum_{k,l,m=1}^N\a_k\,\o^l\,S^{kl}_m\,X^m\,.
\end{align*}
In the third term of the right hand side we have the expression
\begin{align*}
\sum_{l,k=1}^N \ol{v_i^k}\o^l\,S^{kl}_m&=\sum_{l,k=1}^N\sum_{\m=1}^{N+1} \ol{v^k_i}\,\o^l\,p_\m\, v^k_\m\,v^l_\m\,\ol{v^m_\m}\\
&=\sum_{l=1}^N\sum_{\m=1}^{N+1} \o^l \,p_\m \,v^l_\m\,\ol{v^m_\m}\sum_{k=1}^N\ol{v^k_i}\,v^k_\m\\
&=\sum_{l=1}^N\sum_{\m=1}^{N+1} \o^l \,p_\m \,v^l_\m\,\ol{v^m_\m}\left((-1)(1-\delta_{i,\m})+\norme{v_i}^2\, \delta_{i,m}\right)\\
&=\sum_{l=1}^N\sum_{\m=1}^{N+1}(-1)\,\o^l\,p_\m\,v^l_\m\,\ol{v^m_\m}+\sum_{l=1}^N \o^l\,p_i\,v^l_i\,\ol{v^m_i}+\sum_{l=1}^N\normca{v_i}\,\o^l\,p_i\,v^l_i\,\ol{v^m_i}\\
&=\sum_{l=1}^N -\o^l\,\d_{l,m}+\sum_{l=1}^N\o^l\,v^l_i\,\ol{v^m_i}\\
&=-\o^m+\ps{\ol{v_i}}{\o}\,\ol{v^m_i}\,.
\end{align*}
This gives the right coefficient again. We have proved the announced representation of $U$.
\end{proof}

\section{Representation of Repeated Quantum Interactions}

\subsection{Repeated Quantum Interactions}

Let us describe
precisely the physical and the mathematical setup of these models.

\smallskip
We consider a reference quantum system with state space $\rH_S$, which
we shall call the \emph{small system} (even if it is infinite dimensional!). Another system $\rH_E$, called the \emph{environment} is made up of a
chain of identical copies of a quantum system $\rK$, that is,
$$
\rH_E=\bigotimes_{n\in\NN^*} \rK
$$
where the countable tensor product is understood in a sense that we
shall make precise later. 

The dynamics in between $\rH_S$ and $\rH_E$ is obtained as follows. The
small system $\rH_S$ interacts with the first copy $\rK$ of the chain
during an interval $[0,h]$ of time. As usual the interaction is described by some Hamiltonian $H_{\rm{tot}}$ on
$\rH_S\otimes\rK$, that is, the two systems evolve together following
the unitary operator 
$$
U=e^{-ihH_{\rm{tot}}}\,.
$$
After this first interaction, the small system $\rH_S$ stops
interacting with the first copy and starts an interaction with the
second copy which was left unchanged until then. This second
interaction follows the same unitary operator $U$. And so on, the
small system $\rH_S$ interacts repeatedly with the elements of the
chain one after the other, following the same unitary evolution $U$. 
Note that with our assumptions, the unitary operator $U$ can be any unitary operator on $\rH_S\otimes\rK$. 
Let us now give a mathematical setup to this repeated quantum interaction
model.

Let $\rH_S$ be a separable Hilbert space and $\rK$ be a finite dimensional Hilbert space. We choose a fixed orthonormal
basis of $\rK$ of the form $\{X^i\,;\ i\in\rN\cup\{0\}\}$ where $\rN=\{1,\ldots,
N\}$, that is, the dimension of $\rK$ is $N+1$ (note
the particular role played by the vector $X^0$ in our notation). We
consider the Hilbert space
$$
\TF=\bigotimes_{n\in\NN^*}\rK
$$
where this countable tensor product is understood with respect to the
stabilizing sequence $(X^0)_{n\in\NN^*}$. This is to say that, denoting by $X^i_n$ the vector corresponding to $X^i$ in the $n$-th copy of $\rK$, an
orthonormal basis of $\TF$ is made of the vectors 
$$
X_\s=\bigotimes_{n\in\NN^*}X_n^{i_n}
$$
where $\s=(i_n)_{n\in\NN^*}$ runs over the set $\rP$ of all sequences in
$\rN\cup\{0\}$ with only a finite number of terms  different
from 0. 

\smallskip
Let $U$ be a fixed unitary operator on $\rH_S\otimes\rK$. We denote by
$U_n$ the natural ampliation of $U$ to $\rH_S\otimes\TF$ where $U_n$
acts as $U$ on the tensor product of $\rH_S$ and the $n$-th copy of
$\rK$ and as the identity on the other copies of $\rK$. In
our physical model, the operator $U_n$ is the unitary operator
expressing the result of the $n$-th interaction. We also define
$$
V_n=U_n\,U_{n-1}\ldots U_1\,,
$$
for all $n\in\NN^*$ and we put $V_0=I$. Clearly, the operator $V_n$ represents the transformation of the whole system after the
$n$ first interactions. 

\smallskip
Define the elementary operators $a^i_j$, $i,j\in\rN\cup\{0\}$ on $\rK$
by
$$
a^i_j\, X^k=\delta_{i,k}\,X^j\,.
$$
Note that with Dirac notation we get $a^i_j=\vert X^j\rangle\langle X^k\vert$.
We denote by $a^i_j(n)$ their natural ampliation to $\TF$ acting on
the $n$-th copy of $\rK$ only. The operator $U$ can
now be decomposed as
$$
U=\sum_{i,j\in\rN\cup\{0\}} U^i_j\otimes a^i_j
$$
for some bounded operators $U^i_j$ on $\rH_S$. The unitarity of $U$ is then equivalent to the relations
$$
\sum_{k\in\rN\cup\{0\}} (U^k_i)^*\, U^k_j=\sum_{k\in\rN\cup\{0\}}  U^k_j\,(U^k_i)^*=\delta_{i,j}\,I\,.
$$
With this representation of $U$ the operator $U_n$,
representing the $n$-th interaction, is also given by
$$
U_n=\sum_{i,j\in\rN\cup\{0\}} U^i_j\otimes a^i_j(n)\,.
$$
With these notations, the sequence ${(V_n)}$ of
unitary operators describing the $n$ first interactions can be
written as follows:
\begin{align*}
V_{n+1}&=U_{n+1}\, V_n\\
&=\sum_{i,j\in\rN\cup\{0\}} U^i_j\otimes a^i_j(n+1)V_n\,.
\end{align*}
But, inductively, the operator $V_n$ acts only on $\rH_S$ and the $n$ first sites
of the chain $\TF$, whereas the operators $a^i_j(n+1)$ act on the
$(n+1)$-th site only. Hence they commute. In the following, we shall
drop the $\otimes$ symbol, identifying operators like $a^i_j(n+1)$
with $I_{\rH_S}\otimes a^i_j(n+1)$, identifying $U^i_j$ with $U^i_j\otimes I_{\TF}$, etc. This gives finally

\begin{equation}\label{E:Vn}
\boxed{
V_{n+1}=\sum_{i,j\in\rN\cup\{0\}} U^i_j\,V_n \,a^i_j(n+1)\,.
}
\end{equation}

This equation describes completely the evolution of the whole system. One can notice its particular form, where the step between $V_n$ and the new $V_{n+1}$ is described by a basis of all the possible transformations coming from the environment, namely the $a^i_j(n+1)$, associated to corresponding effect, namely $U^i_j$, on the small system $\rH_S$.

\subsection{Classical Unitary Actions and Classical Random Walks}

Now come back to the quantum repeated interaction setup $V_{n+1}=U_nV_n$.  We consider a repetaed interaction model where the basis unitary operator $U=\sum_{i=1}^{N+1}U_i\otimes\vert\phi_i\rangle\langle\phi_i\vert$ is a classical unitary operator. Let X be the corresponding obtuse random variable given by Theorem  \ref{representation}. We consider a sequence of i.i.d random variables $(X_n)$ where $X_1\sim X$. This way the operator $U_n$ can be written as
\begin{equation}\label{E:reprUn}
U_n=A\otimes I+\sum_{j=1}^N B_j\otimes F\,\rM_{X^j_n}\, F^*,
\end{equation}
where the operator $F\,\rM_{X^j_n}\, F^*$ acts on the n-th copy of $\bigotimes_{n\in\mathbb N^*}\mathcal K$. This way the quantum repeated interactions is described by the following recursive equation

\begin{equation}\label{E:Vnh}
\boxed{
V_{n+1}=A\otimes V_n+\sum_{j=1}^N B_j\otimes V_n\otimes F\,\rM_{X^j_n}\, F^*\,.
}
\end{equation}

In terms of associated $3$-tensors and operators $a^i_j(n)$ we get 

\begin{equation}\label{E:Vn}
\boxed{
V_{n+1}=A\otimes V_n+\sum_{j=1}^N \sum_{l,k=0}^N S_{k}^{jl}B_j\otimes V_n\otimes a^j_k(n)\,.
}
\end{equation}

In the sequel, we shall be interested in the continuous time limit of this equation.

\section{Continuous-Time Limits of Classical Unitary Actions}

%
%
%

The idea behind the continuous time limit consists in considering that the time duration $h$ goes to zero. We are then interested in the limit of the operator $V_{[t/h]}$. To this end we introduce the dependance in $h$ in the expression of $V_n$ that is
\begin{equation}\label{E:Vn1}
V_{n+1}^h=A(h)\otimes V_n^h+\sum_{j=1}^N B_j(h)\otimes V_n^h\otimes F\,\rM_{(X^h_n)^j}\, F^*\,.
\end{equation}
or
\begin{equation}\label{E:Vn2}
V_{n+1}^h=A(h)\otimes V_n^h+\sum_{j=1}^N \sum_{l,k=0}^N S_{k}^{jl}(h)B_j(h)\otimes V_n^h\otimes a^j_k(n)\,.
\end{equation}
Usually in the context of quantum repeated interactions the discrete time evolution appears under the form
\begin{equation}\label{E:Vn3}
V_{n+1}^h=\sum_{j,k}  L^j_k(h)\otimes V_n^h\otimes a^j_k(n)\,.
\end{equation}
In order to obtain a non-trivial continuous time limit it has been shown in \cite{A-P} that precise asymptotic has to be imposed to the coefficients $L_k^j$. In parallel the continuous time limit of multiplication operator $\,\rM_{(X^h_n)^j}\, \,$ have been derived in \cite{ADP}. Let us recall the main result. To this end let us introduce
$$\rM_{Z_{t}^h}=\sum_{k=1}^{[t/h]}\sqrt{h}\rM_{X_k^h}$$

\begin{theorem}\label{Mij} Assume that there exists operators $M_{ij}^k$ for all $i,j=1,\ldots, N$ and all $k=0,\ldots,N$ such that: 
$$
M^{ij}_0=\lim_{h\to0} S^{ij}_0(h)
$$
and
$$
M^{ij}_k=\lim_{h\to0}\sqrt h\, S^{ij}_k(h).
$$ 
Then the operators of multiplication $\rM_{(Z^h_t)^i}$, acting of $\Phi$, converge strongly on a certain domain $\rD$ to the operators 
\begin{equation}\label{E:rY}
\rZ^i_t=\sum_{j,k=0}^N M^{ij}_k\, a^j_k(t)\,.
\end{equation}
These operators are the operators of multiplication by $Z$, the complex martingale satisfying
\begin{equation}\label{E:ESY}
[Z^i\,,Z^j]_t=M^{ij}_0\, t+\sum_{k=1}^N M^{ij}_k\, Z^k_t
\end{equation}
and
\begin{equation}\label{E:ESY2}
[\ol{Z^i}\,,Z^j]_t=\d_{ij}\, t+\sum_{k=1}^N \ol{M^{ik}_l}\, Z^k_t\,.
\end{equation}
\end{theorem}

Under the light of this theorem, we shall see that the operators $A(h)$ and $B_j(h),j=1,\ldots,N$ have to obey precise asymptotic expansions in order to obtain a continuous time limit of \eqref{E:Vn2}. More precisely we get the following final theorem which state the continuous time limit of "classical" actions.

\begin{theorem}\label{62}
Suppose that there exists a family of operators $\{\tilde A,\tilde B_j,j=1\ldots,N\}$ such that
$$\lim_{h\rightarrow0}\frac{B_j(h)}{\sqrt{h}}=\tilde B_j$$
$$\lim_{h\rightarrow0}\frac{A(h)-I}{h}=\tilde A$$
Then the process $(V_{[t/h]}^h)$ converges to the solution of
$$dU_t=\tilde AU_t\,dt+\sum_{j=1}^N\tilde B_jU_t\,dZ^j_t.$$
\end{theorem}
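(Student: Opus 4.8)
The plan is to substitute the prescribed asymptotic expansions into the discrete recursion \eqref{E:Vn1} and identify the limit with the Picard iteration of the announced quantum stochastic differential equation, using the convergence of multiplication operators from Theorem \ref{Mij} as the key input. First I would rewrite \eqref{E:Vn1} as an increment: setting $\Delta V_n^h = V_{n+1}^h - V_n^h$, the hypotheses $A(h) = I + h\tilde A + o(h)$ and $B_j(h) = \sqrt h\,\tilde B_j + o(\sqrt h)$ give
$$
V_{n+1}^h = V_n^h + h\,\tilde A\otimes V_n^h + \sum_{j=1}^N \tilde B_j\otimes V_n^h\otimes F\,\rM_{\sqrt h\,(X_n^h)^j}\,F^* + R_n^h\,,
$$
where $R_n^h$ collects the error terms. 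The middle sum is exactly an increment of the rescaled multiplication process $\rM_{Z_t^h}=\sum_{k=1}^{[t/h]}\sqrt h\,\rM_{X_k^h}$ appearing in Theorem \ref{Mij}. Iterating from $V_0^h = I$ and passing to $n = [t/h]$, I would express $V_{[t/h]}^h$ as a discrete sum that is manifestly a Riemann--Stieltjes approximation of
$$
U_t = I + \int_0^t \tilde A\,U_s\,ds + \sum_{j=1}^N \int_0^t \tilde B_j\,U_s\,dZ^j_s\,,
$$
the integrated form of the stated SDE.

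The core analytic work is then twofold. First, one must control the error terms $R_n^h$: there are $O(1/h)$ of them, each of size $o(h)$ coming from $A(h)$ and $o(\sqrt h)$ from $B_j(h)$; the latter, being multiplied by increments of a martingale-type process, must be handled by an $L^2$ / orthogonality estimate (the increments $\sqrt h\,\rM_{(X_n^h)^j}$ are, up to the asymptotics, martingale increments, so their contributions add up in square rather than linearly) so that the total error vanishes on the domain $\rD$. Second, one must show that the discrete sums converge strongly on $\rD$ to the corresponding stochastic integrals; here I would invoke Theorem \ref{Mij} directly for the convergence $\rM_{(Z_t^h)^i} \to \rZ^i_t = \sum_{j,k} M^{ij}_k\,a^j_k(t)$, together with a standard Gronwall-type argument to propagate convergence of $V_n^h$ through the iteration — the operator $V_n^h$ enters linearly, so a uniform bound $\|V_n^h\,\xi\| \le C\|\xi\|$ (immediate since the $V_n^h$ are unitary) lets one close the estimate.

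The main obstacle, I expect, is the same one that makes \cite{A-P} technically delicate: making the convergence genuinely operator-theoretic rather than formal. The operators $a^j_k(t)$ and hence $\rZ^i_t$ are unbounded on $\TF$, so all statements must be restricted to a suitable common dense domain $\rD$ stable under the relevant operators, and one must verify that $U_t$ is well-defined there and that the approximants $V_{[t/h]}^h$ preserve $\rD$ uniformly in $h$. Controlling the cross terms $\tilde B_i\tilde B_j \otimes \rM_{(X_n^h)^i}\rM_{(X_n^h)^j}$ that arise when squaring increments — which is where the $3$-tensor identities \eqref{E:XX=TX}, i.e. $S^{ij}_0(h)\to M^{ij}_0$ and $\sqrt h\,S^{ij}_k(h)\to M^{ij}_k$, enter to produce exactly the bracket \eqref{E:ESY} — is the combinatorial heart of the estimate; everything else is bookkeeping once the domain $\rD$ and the uniform bounds are in place. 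I would therefore structure the proof as: (i) fix $\rD$ and record uniform domain bounds; (ii) write the telescoped discrete Duhamel formula; (iii) estimate the error terms using martingale orthogonality plus the $3$-tensor asymptotics; (iv) invoke Theorem \ref{Mij} and Gronwall to conclude strong convergence on $\rD$ to the unique solution of $dU_t = \tilde A U_t\,dt + \sum_j \tilde B_j U_t\,dZ^j_t$.
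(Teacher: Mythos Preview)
Your strategy is sound in principle but considerably more laborious than the paper's actual proof, and it misses the main shortcut that the setup was designed to enable. The paper does \emph{not} redo any of the analytic work (domain control, error estimates, martingale orthogonality, Gronwall) that you outline in steps (i)--(iv). Instead, it switches from the multiplication-operator form \eqref{E:Vn1} that you start from to the $3$-tensor form \eqref{E:Vn2},
\[
V_{n+1}^h = A(h)\,V_n^h + \sum_{j=1}^N\sum_{l,k=0}^N S^{jl}_k(h)\,B_j(h)\,V_n^h\,a^l_k(n)\,,
\]
which is already literally in the shape \eqref{E:Vn3} to which the general convergence theorem of \cite{A-P} applies. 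All that remains is to check the scaling hypotheses of \cite{A-P} on the coefficients: for the $a^0_0$ slot this is exactly the assumption $(A(h)-I)/h\to\tilde A$, and for the others one writes $S^{jl}_k(h)\,B_j(h)=\bigl(\sqrt h\,S^{jl}_k(h)\bigr)\bigl(B_j(h)/\sqrt h\bigr)$ and invokes the asymptotics $\sqrt h\,S^{ij}_k(h)\to M^{ij}_k$ of Theorem \ref{Mij} together with $B_j(h)/\sqrt h\to\tilde B_j$. The convergence of $V^h_{[t/h]}$ to the solution of the announced QSDE is then a citation, not an argument.

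In short: you propose to reprove \cite{A-P} in this special case, using Theorem \ref{Mij} as the driver; the paper instead uses Theorem \ref{Mij} only for the \emph{coefficient} asymptotics and delegates all the hard analysis (your steps (i)--(iv)) to \cite{A-P}. Your route would work if carried out carefully, and has the merit of being more self-contained, but the paper's route is a two-line verification once one passes to the $a^j_k$ decomposition. The moral is that the $3$-tensor representation \eqref{E:Vn2} is not just bookkeeping: it is precisely what lets one plug directly into the existing continuous-limit machinery.
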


\begin{proof}
The proof is a direct application of \cite{A-P} and the asymptotic conditions of Theorem \ref{Mij}. Indeed, come back to the expression
\begin{equation}\label{E:Vn}
V_{n+1}^h=A(h)\otimes V_n^h+\sum_{j=1}^N \sum_{l,k=0}^N S_{k}^{jl}(h)B_j(h)\otimes V_n^h\otimes a^j_k(n)\,.
\end{equation}
which can be written as
\begin{equation}\label{E:Vn}
V_{n+1}^h=A(h)\otimes V_n^h+\sum_{j=1,k=0}^N \sum_{l=0}^N S_{k}^{jl}(h)B_j(h)\otimes V_n^h\otimes a^j_k(n)\,.
\end{equation}
which is directly in the form of \eqref{E:Vn3}. Now, it is sufficient to show that the operators
$$\sum_{l=0}^N S_{k}^{jl}(h)B_j(h)\,\,\textrm{and}\,\,A(h)$$
satisfies the conditions of \cite{A-P}. It is clear for $A(h)$ and writing 
$$S_{k}^{jl}(h)B_j(h)=\sqrt{h}S_{k}^{jl}(h)\frac{B_j(h)}{\sqrt{h}}$$ the result is immediate.
\end{proof}

\section{Examples}

In this section, we provide few enlightening examples which should illustrate Theorem \ref{62}. In particular we concentrate of examples in dimension 2 and 3 for the environment.

\subsection{Obtuse random variable on $\CC$}

Let $X$ take the values $z_1=\r_1\,e^{i\tau_1}$ with probability $p$ and $z_2=\r_2\,e^{i\tau_2}$ with probability $q$. 
The relation $\EE[X]=0$ gives $pz_1+qz_2=0$ and thus
$$
z_2=-\frac pq\, z_1
$$
so that 
$$
z_2=-\frac pq\,\r_1\, e^{i\tau_1}\,.
$$
The relation $\EE[\ab{X}^2]=1$ gives $p\ab{z_1}^2+q\ab{z_2}^2=1$, that is,
$$
p\r_1^2+q\,\frac{p^2}{q^2}\,\r_1^2=1\,.
$$
This gives
$$
\r_1=\sqrt{\frac qp}\,.
$$
(There a choice of sign made here)
As a consequence
\begin{align*}
z_1&=\sqrt{\frac qp}\, e^{i\tau}\\
z_2&=-\sqrt{\frac pq}\, e^{i\tau}\,.
\end{align*}

Now
we consider the multiplication operator by $X$, denoted $\rM_X$. For an orthonormal basis of $L^2(\O)=\CC^2$ we have two natural choices.

The basis $\{\indic, X\}$ is an orthonormal basis. In this basis we have
\begin{align*}
X\indic&=X\\
XX&=X^2=\begin{cases} 
\frac qp\, e^{2i\tau}&\mbox{with probability }p\\
\frac pq\, e^{2i\tau}&\mbox{with probability }q\,.
\end{cases}
\end{align*}
Trying to write it as $\l \indic +\m X$ gives
$$
\l=e^{2i\tau}\qq\mbox{and}\qq\m=c_p\,e^{i\tau}\,,
$$
where $c_p$ is the usual quantity
$$
c_p=\frac{q-p}{\sqrt{pq}}\,.
$$
This gives the matrix
$$
\rM_X=
\left(
\begin{matrix}
 0 & e^{2i\tau}  \\
 1 &   c_p\,e^{i\tau}  
\end{matrix}
\right)=e^{i\tau}\,
\left(
\begin{matrix}
 1 &  e^{i\tau}   \\
 e^{-i\tau} &  c_p    
\end{matrix}
\right)\,.
$$

Another possible choice is the basis $\{\indic, \ol{X}\}$ which is also an orthonormal basis. In that case we have
\begin{align*}
X\indic&=X=\ol X\, e^{2i\tau}\\
X\ol X&=\ab{X}^2=\begin{cases} 
\frac qp&\mbox{with probability }p\\
\frac pq&\mbox{with probability }q\,.
\end{cases}
\end{align*}
Trying to write it as $\l \indic +\m X$ gives
$$
\l=1\qq\mbox{and}\qq\m=c_p\,e^{-i\tau}\,.
$$
This gives the matrix
$$
\rM_X=
\left(
\begin{matrix}
 0 & 1 \\
 e^{-2i\tau} &   c_p\,e^{-i\tau}  
\end{matrix}
\right)=e^{-i\tau}\,
\left(
\begin{matrix}
 1 &  e^{i\tau}   \\
 e^{-i\tau} &  c_p    
\end{matrix}
\right)\,.
$$

Let us finally look at classical unitaries in this context.
An orthonormal basis of $\CC^2$ is always of the form
$$
\phi=
\left(
\begin{matrix}
 \sqrt p\, e^{i\tau}   \\
\sqrt q\, e^{i\n}
\end{matrix}
\right)\,,
\qq
\psi=
\left(
\begin{matrix}
 -\sqrt q\, e^{i(\tau-\n+\xi)}   \\
\sqrt p\, e^{i\xi}
\end{matrix}
\right)\,.
$$
This gives
$$
\vert \phi\rangle \langle \phi\vert=
\left(
\begin{matrix}
p  &   \sqrt{pq}\, e^{i(\tau-\n)}  \\
\sqrt{pq}\,e^{i(\n-\tau)}  & q     
\end{matrix}
\right)\,,\qq
\vert \psi\rangle \langle \psi\vert=
\left(
\begin{matrix}
q  &   -\sqrt{pq}\, e^{i(\tau-\n)}  \\
-\sqrt{pq}\,e^{i(\n-\tau)}  & p    
\end{matrix}
\right)\,.
$$
Only one angle is appearing finally, we rename it as $\tau$:
$$
\vert \phi\rangle \langle \phi\vert=
\left(
\begin{matrix}
p  &   \sqrt{pq}\, e^{i\tau}  \\
\sqrt{pq}\,e^{-i\tau}  & q     
\end{matrix}
\right)\,,\qq
\vert \psi\rangle \langle \psi\vert=
\left(
\begin{matrix}
q  &   -\sqrt{pq}\, e^{i\tau}  \\
-\sqrt{pq}\,e^{-i\tau}  & p    
\end{matrix}
\right)\,.
$$
This can be written as
\begin{align*}
\vert \phi\rangle \langle \phi\vert&= pI+
\left(
\begin{matrix}
0  &   \sqrt{pq}\, e^{i\tau}  \\
\sqrt{pq}\,e^{-i\tau}  & q-p    
\end{matrix}
\right)=pI+\sqrt{pq}\left(
\begin{matrix}
0  &   e^{i\tau}  \\
e^{-i\tau}  & c_p    
\end{matrix}
\right)\\
&=pI+\sqrt{pq}\, e^{-i\tau}\,\rM_X
\end{align*}
and
\begin{align*}
\vert \psi\rangle \langle \psi\vert&= qI+
\left(
\begin{matrix}
0  &   -\sqrt{pq}\, e^{i\tau}  \\
-\sqrt{pq}\,e^{-i\tau}  & p-q    
\end{matrix}
\right)=qI-\sqrt{pq}\left(
\begin{matrix}
0  &   e^{i\tau}  \\
e^{-i\tau}  & c_p    
\end{matrix}
\right)\\
&=qI-\sqrt{pq}\, e^{-i\tau}\,\rM_X\,.
\end{align*}

\smallskip\noindent
As \emph{classical unitary operator} is of the form
$$
U=U_1\otimes \vert\phi\rangle\langle\phi\vert+U_2\otimes \vert\psi\rangle\langle\psi\vert
$$
and hence can be written as
\begin{align*}
U&=(pU_1+qU_2)\otimes I+\sqrt{pq}e^{-i\tau}(U_1-U_2)\otimes \rM_X\\
&=A\otimes I+B\otimes \rM_X\,.
\end{align*}
In details, this gives 
\begin{align*}
U&=\begin{cases} (pU_1+qU_2)+q(U_1-U_2)&\mbox{with probability }p\\
(pU_1+qU_2)-p(U_1-U_2)&\mbox{with probability }q
\end{cases}\\
&=\begin{cases} U_1&\mbox{with probability }p\\
U_2&\mbox{with probability }q\,.
\end{cases}
\end{align*}
Now we can plug some asymptotic conditions. In dimension $2$ only diffusive or Poisson noise can appear. As we shall see the conditions of Theorem \ref{62} shall impose asymptotic conditions on the operator $U_i$
\begin{itemize}
\item Diffusive case. For a sake of simplicity assume that 
$$\phi=
\left(
\begin{matrix}
 \sqrt p\,   \\
\sqrt q\, 
\end{matrix}
\right)\,\,\textrm{and}\,\,\psi=
\left(
\begin{matrix}
 -\sqrt q\,   \\
\sqrt p\, 
\end{matrix}
\right)$$
for fixed value of $p$ and $q$ (that is independent of the parameter $h$). In this case the multiplication operator
$$\sqrt{h}\sum_{k=1}^{[t/h]}\rM_{X_k^h}$$
converges to $a_0^1(t)+a_1^0(t)$ which is simply the multiplication operator of the usual Brownian motion on $\mathbb R$. Now Theorem \ref{62} imposes that
$$\lim_{h\rightarrow0}\frac{B(h)}{\sqrt{h}}\,\,\textrm{and}\,\,\lim_{h\rightarrow0}\frac{A(h)-I}{h}$$
Here we have
$$B(h)=\sqrt{pq}(U_1-U_2)\,\,\textrm{and}\,\,A(h)=pU_1+qU_2$$
It is then clear that the operators $U_i$ must depend on $h$. In particular assume that 
$$U_i=I+\sqrt{h}O_i+hP_i+\circ(h)$$
with $pO_1+qO_2=0$ then the limit equation read
$$dU_t=(pP_1+qP_2)U_tdt+\sqrt{pq}(O_1-O_2)U_tdW_t$$
where $(W_t)$ represents a Brownian motion.
\item Poisson case. In order to obtain Poisson noise, the probability $p$ and $q$ must depend in $h$. One can choose for example $p=1-h+\circ{h}$ and $q=h+\circ(h)$. This can be obtained for $\phi$ and $\psi$ of the 
following form
$$\phi=
\left(
\begin{matrix}
 \sqrt \frac{1}{{1+h}}\,   \\
\sqrt \frac{ h}{{1+h}}\, 
\end{matrix}
\right)\,\,\textrm{and}\,\,\psi=
\left(
\begin{matrix}
 -\sqrt \frac{ h}{{1+h}}\,   \\
\sqrt \frac{1}{{1+h}}\, 
\end{matrix}
\right).$$
\end{itemize}
In this case the multiplication operator
$$\sqrt{h}\sum_{k=1}^{[t/h]}\rM_{X_k^h}$$
converges to $a_0^1(t)+a_1^0(t)+a_1^1(t)$ which is simply the multiplication operator by the usual Poisson Process of intensity $1$. Here the asymptotic for $U_i$ are slightly different. Choose for example
$$U_1=I+hV\,\,\textrm{and}\,\,U_2=W.$$
In particular this imposes $V=-iH$ for somme Hermitian operator $H$. One can see that
$$\lim_{h\rightarrow0}\frac{B(h)}{\sqrt{h}}=I-W\,\,\textrm{and}\,\,\lim_{h\rightarrow0}\frac{A(h)-I}{h}=-I+V+W$$
The limit equation reads as
$$dU_t=(-I+V+W)U_tdt+(I-W)(dN_t-dt)=-iHU_tdt+(I-W)U_t(dN_t)$$
\bigskip
\subsection{A physical example in dimension 1}

Let us workout an example based on some concrete Hamiltonian. Let $h>0$ be a fixed parameter. Consider the following Hamiltonian on $\rH\otimes\rK$
$$
H_{\rm tot}=H\otimes I+V\otimes\left(\frac{1}{\sqrt h} a^0_1+\frac 1{\sqrt h} a^1_0+\frac {1-h}h a^1_1\right)\,,
$$
where $V$ is self-adjoint on $\rH$. In other words
$$
H_{\rm tot}=H_S\otimes I+\frac 1{\sqrt h}V\otimes \left(\begin{matrix}0&1\\1&\frac {1-h}{\sqrt h}\end{matrix}\right)\,.
$$
Consider the following orthonormal basis of $\CC^2$:
$$
\vert \phi_1\rangle=\frac{1}{\sqrt{1+h}}\left(\begin{matrix}1\\-\sqrt{h}\end{matrix}\right),\qq
\vert \phi_2\rangle=\frac{\sqrt h}{\sqrt{1+h}}\left(\begin{matrix}1\\1/\sqrt{h}\end{matrix}\right)\,,
$$
so that
$$
\vert\phi_1\rangle\langle \phi_1\vert=\frac 1{1+h}\left(\begin{matrix}1&-\sqrt h\\-\sqrt h&h\end{matrix}\right)\qq\mbox{and}\qq\vert\phi_2\rangle\langle \phi_2\vert=\frac 1{1+h}\left(\begin{matrix}h&\sqrt h\\\sqrt h&1\end{matrix}\right)\,.
$$
Computing
$$
-\sqrt h\vert\phi_1\rangle\langle \phi_1\vert+\frac 1{\sqrt h}\vert\phi_2\rangle\langle \phi_2\vert
$$
gives
$$
\frac 1{1+h}\left(\begin{matrix}0&1+h\\1+h&\frac 1{\sqrt h}-h\sqrt h\end{matrix}\right)=
\left(\begin{matrix}0&1\\1&\frac {1-h}{\sqrt h}\end{matrix}\right)\,.
$$
This means that our Hamiltonian can be rewritten as
$$
H_{\rm tot}=H_S\otimes \left(\vert\phi_1\rangle\langle \phi_1\vert+\vert\phi_2\rangle\langle \phi_2\vert\right)+\frac 1{\sqrt h}V\otimes \left(-\sqrt h\vert\phi_1\rangle\langle \phi_1\vert+\frac 1{\sqrt h}\vert\phi_2\rangle\langle \phi_2\vert\right)
$$
or else
$$
H_{\rm tot}=\left(H_S-V\right)\otimes\vert\phi_1\rangle\langle \phi_1\vert+\left(H_S+\frac 1hV\right)\otimes \vert\phi_2\rangle\langle \phi_2\vert\,.
$$
When written under this form $H_{\rm tot}$ can be easily exponentiated; let us compute $U=e^{-ihH_{\rm tot}}$. We get
\begin{align*}
U&=e^{-ih(H_S-V)}\otimes\vert\phi_1\rangle\langle \phi_1\vert+e^{-ih(H_S+\frac 1hV)}\otimes \vert\phi_2\rangle\langle \phi_2\vert\\
&=e^{-ih(H_S-V)}\otimes\vert\phi_1\rangle\langle \phi_1\vert+e^{-i(hH_S+V)}\otimes \vert\phi_2\rangle\langle \phi_2\vert\,.
\end{align*}
We recover the form of a classical unitary operator on $\rH\otimes\rK$
$$
U=U_1\otimes\vert\phi_1\rangle\langle \phi_1\vert+U_2\otimes \vert\phi_2\rangle\langle \phi_2\vert\,.
$$
The matrix form for $U$ is then
$$
\frac{1}{1+h}\left(\begin{matrix}U_1+hU_2&\sqrt h(U_2-U_1)\\\sqrt h(U_2-U_1)&hU_1+U_2\end{matrix}\right)\,.
$$
Let us expand the terms of this matrix up to $o(h)$.
\begin{align*}
\frac{1}{1+h}(U_1+hU_2)&=(1-h+o(h))(I-ih(H_S-V)+he^{-iV}+o(h))\\
&=(1-h)I-ihH_S+ihV+he^{-iV}+o(h)\\
\frac{\sqrt h}{1+h}(U_2-U_1)&=(\sqrt h+o(h))(e^{-iV}-I-ihH_S+o(h))\\
&=\sqrt he^{-iV}+o(h)\\
\frac{1}{1+h}(hU_1+U_2)&=(1-h+o(h))(hI+e^{-iV}+ hK+o(h)\\
&=e^{-iV}+hK'+o(h)
\end{align*}
where $K$ and $K'$ are complicated expressions which we will not need to develop.
Altogether this gives
$$
U=\left(\begin{matrix}I+h(-iH_S+e^{-iV}-I+iV)+o(h)&&\sqrt h(e^{-iV}-I)+o(h)\\\\
\sqrt h(e^{-iV}-I)+o(h)&&I+(e^{-iV}-I)+o(1)\end{matrix}\right)\,.
$$
In the framework of \cite{A-P}, the repeated quantum interaction scheme associated to that unitary operator converges to the solution of the QSDE
$$
dV_t=(-iH_S+e^{-iV}-I+iV)V_t\, dt+(e^{-iV}-I)V_t\, (da^0_1(t)+da^1_0(t)+da^1_1(t))
$$
or else
$$
dV_t=(-iH_S+iV)V_t\, dt+(e^{-iV}-I)V_t\, dN_t
$$
where $(N_t)$ is a standard Poisson process.

\smallskip
Let us try to understand this evolution equation from the discrete-time one, directly in probabilistic terms.
The fact that
$$
U=U_1\otimes\vert\phi_1\rangle\langle \phi_1\vert+U_2\otimes \vert\phi_2\rangle\langle \phi_2\vert
$$
can be written
\begin{align*}
U&=\frac{1}{1+h}(U_1+hU_2)\otimes I+\frac{1}{1+h}\left(\begin{matrix} 0&\sqrt h(U_2-U_1)\\
\sqrt h(U_2-U_1)&(1-h)(U_2-U_1)\end{matrix}\right)\\
&=\frac{1}{1+h}(U_1+hU_2)\otimes I+\frac{\sqrt h}{1+h}(U_2-U_1)\otimes \left(\begin{matrix} 0&1\\1&\frac{1-h}{\sqrt h}\end{matrix}\right)\,.
\end{align*}
With the matrix on the right-hand side we recognize the multiplication operator by the obtuse random variable $X$ taking values $1/\sqrt h$ with probability $h/(1+h)$ and $-\sqrt h$ with probability $1/(1+h)$.

We have
$$
U=\frac{1}{1+h}(U_1+hU_2)\otimes I+\frac{\sqrt h}{1+h}(U_2-U_1)\otimes X\,.
$$
This means that $U$ takes, with probability $h/(1+h)$, the value
$$
U=\frac{1}{1+h}(U_1+hU_2)+\frac{\sqrt h}{1+h}(U_2-U_1)\frac1{\sqrt h}= U_2
$$
and, with probability $1/(1+h)$, the value
$$
U=\frac{1}{1+h}(U_1+hU_2)+\frac{\sqrt h}{1+h}(U_2-U_1)(-\sqrt h)= U_1\,.
$$
The repeated interaction scheme gives
$$
V_{n+1}=\frac{1}{1+h}(U_1+hU_2)V_n\otimes I_{n+1}+\frac{\sqrt h}{1+h}(U_2-U_1)V_n\otimes X_{n+1}
$$
or else
\begin{equation}\label{E:deltaV}
V_{n+1}-V_n=\frac{1}{1+h}(U_1+hU_2-(1+h)I)V_n\otimes I_{n+1}+\frac{\sqrt h}{1+h}(U_2-U_1)V_n\otimes X_{n+1}\,.
\end{equation}
The unitary operator $V_{n+1}$ is equal to $V_n$ multiplied on the left by $U_1$ or $U_2$ with respective probability $1/(1+h)$ and $h/(1+h)$.

This means that very often, a given value of $V_n$ will get multiplied on the left by $U_1$, more exactly it is multiplied on the left by $U_1^n$ with probability $1/(1+h)^n$. If the time steps are of length $h$ too, we get a multiplication on the right by $U_1^{t/h}$ with probability $1/(1+h)^{t/h}$. This means, in the limit $h$ tends to 0, that starting with a $V_{t_0}$, we have a random variable $\tau$ which is the first time when $X_n=1/\sqrt h$. The law of $\tau$ is a geometrical law which converge, when $h$ tends to 0 to an exponential law $\rE(1)$. Before that jumping time $\tau$, the unitary $V_{t}$ is obtained by
$$
V_t=e^{-i(t-t_0)(H_S-V)}V_{t_0}\,.
$$
At the time $\tau$, the unitary $V_t$ gets multiplied by $U_2$, that is, by $e^{-iV}$ for small $h$. This is exactly what is described by the QSDE associated to $V_t$ above.

One can also understand it directly with Equation (\ref{E:deltaV}). We have
$$
\frac{1}{1+h}(U_1+hU_2-(1+h)I)=-ihH_S+ihV+h(e^{iV}-I)+o(h)
$$
and 
$$
\frac{\sqrt h}{1+h}(U_2-U_1)=\sqrt{h}(e^{-iV}-I+o(1))
$$
so that
$$
V_{n+1}-V_n=(-iH_S+iV+(e^{iV}-I)+o(1))V_n\otimes (hI_{n+1})+(e^{-iV}-I+o(1))V_n\otimes (\sqrt{h}X_{n+1})\,.
$$
The random variable $\sqrt h X_n$ takes the values $-h$ with probability $1/(1+h)$ and 1 with probability $h/(1+h)$. We have also
$$
V_{n+1}-V_n=(-iH_S+iV)V_n\otimes (hI_{n+1})+(e^{-iV}-I+o(1))V_n\otimes (\sqrt{h}X_{n+1}+hI_{n+1})\,.
$$
The random variable $\sqrt h X_n+hI_n$ takes the values $0$ with probability $1/(1+h)$ and $1+h$ with probability $h/(1+h)$. At the limit $h$ tends to 0 we are clearly dealing with a standard Poisson process and the evolution equation is exactly our QSDE.

\subsection{An example in dimension 3}

We finish by describing two cases in the situation where $\dim\mathcal K=3$. To our best of knowledge, there exists no parametrization of orthonormal basis of $\mathbb C^3$ as the one we have described for $\mathbb C^2$. This way we are not able to describe the whole situation and we restrict ourself to two examples which we think are sufficiently illustrating. In the two below examples we describe directly the obtuse system $v_i\,,\ i=1,2,3$ involved in Theorem \ref{representation}.

\begin{itemize}
\item First example. Consider the following vectors 
$$v_1=\left(\begin{array}{cc}1\\0\end{array}\right),\quad v_2=\left(\begin{array}{cc}-1\\1\end{array}\right),\quad v_3=\left(\begin{array}{cc}-1\\-2\end{array}\right).$$
On can check that the involved probabilities are
$$p_1=p=\frac 12,q_1=q=\frac 13,r_1=r=\frac16.$$
Here we assume that the unitary operators follows the following assumptions
$$U_i=I+\sqrt{h}O_i+hP_i.$$
In order to obtain an effective limit we impose that
$$\frac{O_1}{2}+\frac{O_2}{3}+\frac{O_3}{6}=0.$$
In this case we have the following expression for $A(h),B_1(h),B_2(h)$:
\begin{eqnarray*}A&=&pU_1+qU_2+rU_3=I+h\left(\frac{P_1}{2}+\frac{P_2}{3}+\frac{P_3}{6}\right)\\
B_1&=&\sqrt{h}\left(\frac{O_1}{2}-\frac{O_2}{3}-\frac{O_3}{6}\right)\\
B_2&=&\sqrt{h}\left(\frac{O_2}{3}-\frac{O_3}{3}\right)
\end{eqnarray*}
In that case, the operators of multiplication $\rM_{(Z^h_t)^i},i=1,\ldots,2$ converge to operators of multiplication $\rM_{(Z_t)^i},i=1,\ldots,2$ where
$$
\begin{cases}
Z^1_t&= W_1(t)\\\\
Z^2_t&=W_2(t)\,,
\end{cases}
$$
with $(W_1,W_2)$ two independent Brownian motions. It turns out that the continuous limit evolution satisfies
$$dU_t=\left(\frac{P_1}{2}+\frac{P_2}{3}+\frac{P_3}{6}\right)dt+\left(\frac{O_1}{2}-\frac{O_2}{3}-\frac{O_3}{6}\right)dW_1(t)+\left(\frac{O_2}{3}-\frac{O_3}{3}\right)dW_2(t).$$
\item Second example. In this example we aim at mixing Poisson process and Brownian motion. This can be done by considering
$$
v_1=\frac 1{\sqrt 2}\left(\begin{matrix} i\\\ecarte 1\end{matrix}\right)\ ,\qq 
v_2=\frac 1{\sqrt{2h}}\left(\begin{matrix}{1-i\sqrt{h}}\\\ecarte{i-\sqrt{h}}\end{matrix}\right)\ ,\qq 
v_3=\frac 1{\sqrt 2}\left(\begin{matrix}{-2\sqrt{h}-i}\\\ecarte{-1-2i\sqrt h}\end{matrix}\right),
$$
with associated probabilities.
$$p_1=1/2,p_2=h/(1+2h),p_3=1/(2+4h).$$
For the unitary operators we assume that
$$U_1=I+\sqrt{h}O_1+hP_1,U_2=U_2,U_3=I+\sqrt{h}O_3+hP_3$$
such that
$$O_1+O_3=0.$$
Then we get the following expression
\begin{eqnarray*}A&=&I+h\left(\frac{P_1}{2}+U_2+\frac{P_3}{2}-I\right),\\
B_1&=&\frac{-i}{2\sqrt{2}}U_1+\frac{h(1+i\sqrt h)}{\sqrt 2(1+2h)\sqrt h}U_2+\frac{-2\sqrt h+i}{(2+4h)\sqrt 2}U_3\\
&=&\sqrt h\left(\frac{-i}{2\sqrt{2}}O_1+\frac{U_2-I}{\sqrt 2}+\frac{i}{2\sqrt 2}O_3\right),\\
B_2&=&\frac{1}{2\sqrt{2}}U_1+\frac{h(-i-\sqrt h)}{\sqrt 2(1+2h)\sqrt h}U_2+\frac{-1+2i\sqrt h}{(2+4h)\sqrt 2}U_3\\
&=&\sqrt h\left(\frac{1}{2\sqrt{2}}O_1-i\frac{U_2-I}{\sqrt 2}+\frac{-1}{2\sqrt 2}O_3\right).
\end{eqnarray*}
Moreover, in that case the operators of multiplication $\rM_{(Z^h_t)^i},i=1,\ldots,2$ converge to operators of multiplication $\rM_{(Z_t)^i},i=1,\ldots,2$ where
$$
\begin{cases}
Z^1_t&=\frac{1}{\sqrt{2}}\, (N_t-t)+i W_t\\\\
Z^2_t&=\frac{i}{\sqrt{2}}\, (N_t-t)+\, W_t\,,
\end{cases}
$$
with $N$ a Poisson process of intensity $1$ independent of the Brownian motion $W$. The limit equations can be then computed and we get:
\begin{eqnarray}dU_t&=&\left(\frac{P_1}{2}+U_2-I+\frac{P_3}{2}\right)dt+\\&&+\left(\frac{-i}{2\sqrt{2}}O_1+\frac{U_2-I}{\sqrt 2}+\frac{i}{2\sqrt 2}O_3\right)dZ_1(t)+\left(\frac{1}{2\sqrt{2}}O_1-i\frac{U_2-I}{\sqrt 2}+\frac{-1}{2\sqrt 2}O_3\right)
dZ_2(t)\nonumber\\
&=&\left(\frac{P_1}{2}+\frac{P_3}{2}\right)dt+(I-U_2)dN_t+\frac{1}{\sqrt 2}(O_1-O_3)dW_t\,.
\end{eqnarray}
 \end{itemize}

\centerline{\timesept St\'ephane ATTAL}
\vskip -1mm
\centerline{\timesept Universit\'e de Lyon}
\vskip -1mm
\centerline{\timesept Universit\'e de Lyon 1, C.N.R.S.}
\vskip -1mm
\centerline{\timesept Institut Camille Jordan}
\vskip -1mm
\centerline{\timesept 21 av Claude Bernard}
\vskip -1mm
\centerline{\timesept 69622 Villeubanne cedex, France}
\vskip 2mm
\centerline{\timesept Julien DESCHAMPS}
\vskip -1mm
\centerline{\timesept Universit\`a degli Studi di Genova}
\vskip -1mm
\centerline{\timesept Dipartimento di Matematica}
\vskip -1mm
\centerline{\timesept Via Dodecaneso 35}
\vskip -1mm
\centerline{\timesept 16146 Genova, Italy}
\vskip 2mm
\centerline{\timesept Cl\'ement PELLEGRINI}
\vskip -1mm
\centerline{\timesept Institut de Math\'ematiques de Toulouse }
\vskip -1mm
\centerline{\timesept Equipe de Statistique et de Probabilit\'e}
\vskip -1mm
\centerline{\timesept Universit\'e Paul Sabatier (Toulouse III)}
\vskip -1mm
\centerline{\timesept 31062 Toulouse Cedex 9, France}

\end{document}